\newtheorem{theorem2}{Theorem2}[section]
\newtheorem{assumption}[theorem2]{Assumption}
\newtheorem{prop}{Proposition}
\definecolor{myGreen}{RGB}{0, 0, 0}
\definecolor{myBlue}{RGB}{0, 0, 0}
\begin{document}
%
% paper title
% Titles are generally capitalized except for words such as a, an, and, as,
% at, but, by, for, in, nor, of, on, or, the, to and up, which are usually
% not capitalized unless they are the first or last word of the title.
% Linebreaks \\ can be used within to get better formatting as desired.
% Do not put math or special symbols in the title.
\title{Proof of Travel for Trust-Based Data Validation in V2I Communication}
%
%
% author names and IEEE memberships
% note positions of commas and nonbreaking spaces ( ~ ) LaTeX will not break
% a structure at a ~ so this keeps an author's name from being broken across
% two lines.
% use \thanks{} to gain access to the first footnote area
% a separate \thanks must be used for each paragraph as LaTeX2e's \thanks
% was not built to handle multiple paragraphs
%

\author{Dajiang~Suo, Baichuan~Mo,
        Jinhua Zhao,
        and~Sanjay~E.~Sarma% <-this % stops a space
\thanks{D. Suo and S. Sarma are with the Department of Mechanical Engineering, Massachusetts Institute of Technology, Cambridge, MA, 02139. USA e-mail: {djsuo, sesarma}@mit.edu.}% <-this % stops a space
\thanks{B. Mo is with the Department of Civil and Environmental Engineering, Massachusetts Institute of Technology, Cambridge,
MA, 02139. USA e-mail: {baichuan}@mit.edu.}% <-this % stops a space
\thanks{J. Zhao is with the Department of Urban Studies and Planning, Massachusetts Institute of Technology, Cambridge,
MA, 02139. USA e-mail: {jinhua}@mit.edu.}% <-this % stops a space
\thanks{Corresponding author: Dajiang Suo (djsuo@mit.edu)}}

% note the % following the last \IEEEmembership and also \thanks - 
% these prevent an unwanted space from occurring between the last author name
% and the end of the author line. i.e., if you had this:
% 
% \author{....lastname \thanks{...} \thanks{...} }
%                     ^------------^------------^----Do not want these spaces!
%
% a space would be appended to the last name and could cause every name on that
% line to be shifted left slightly. This is one of those "LaTeX things". For
% instance, "\textbf{A} \textbf{B}" will typeset as "A B" not "AB". To get
% "AB" then you have to do: "\textbf{A}\textbf{B}"
% \thanks is no different in this regard, so shield the last } of each \thanks
% that ends a line with a % and do not let a space in before the next \thanks.
% Spaces after \IEEEmembership other than the last one are OK (and needed) as
% you are supposed to have spaces between the names. For what it is worth,
% this is a minor point as most people would not even notice if the said evil
% space somehow managed to creep in.

% The paper headers
\markboth{This paper has been accepted for publication in IEEE Internet of Things Journal}%
{Shell \MakeLowercase{\textit{et al.}}: Bare Demo of IEEEtran.cls for IEEE Journals}
% The only time the second header will appear is for the odd numbered pages
% after the title page when using the twoside option.
% 
% *** Note that you probably will NOT want to include the author's ***
% *** name in the headers of peer review papers.                   ***
% You can use \ifCLASSOPTIONpeerreview for conditional compilation here if
% you desire.

% If you want to put a publisher's ID mark on the page you can do it like
% this:
%\IEEEpubid{0000--0000/00\$00.00~\copyright~2015 IEEE}
% Remember, if you use this you must call \IEEEpubidadjcol in the second
% column for its text to clear the IEEEpubid mark.

% use for special paper notices
%\IEEEspecialpapernotice{(Invited Paper)}

% make the title area
\maketitle

% As a general rule, do not put math, special symbols or citations
% in the abstract or keywords.
\begin{abstract}
Previous work on misbehavior detection and trust management for Vehicle-to-Everything (V2X) communication security is effective in identifying falsified and malicious V2X data. Each vehicle in a given region can be a witness to report on the misbehavior of other nearby vehicles, which will then be added to a "blacklist." However, there may not exist enough witness vehicles that are willing to opt-in in the early stage of connected-vehicle deployment. In this paper, we propose a "whitelisting" approach to V2X security, titled Proof-of-Travel (POT), which leverages the support of roadside infrastructure. Our goal is to transform the power of cryptography techniques embedded within Vehicle-to-Infrastructure (V2I) protocols into game-theoretic mechanisms to incentivize connected-vehicle data sharing and validate data trustworthiness simultaneously.

The key idea is to determine the reputation of and the contribution made by a vehicle based on its distance traveled and the information it shared through V2I channels. In particular, the total vehicle miles traveled for a vehicle must be testified by digital signatures signed by each infrastructure component along the path of its movement. While building a chain of proofs of spatial movement creates burdens for malicious vehicles, acquiring proofs does not result in extra costs for normal vehicles, which naturally want to move from the origin to the destination. The POT protocol is used to enhance the security of previous voting-based data validation algorithms for V2I crowdsensing applications. For the POT-enhanced voting, we prove that all vehicles choosing to cheat are not a pure Nash equilibrium using game-theoretic analysis. Simulation results suggest that the POT-enhanced voting is more robust to malicious data.
\end{abstract}

% Note that keywords are not normally used for peerreview papers.
\begin{IEEEkeywords}
V2I, crowdsensing, security, Blockchain, game theory, voting.
\end{IEEEkeywords}

% For peer review papers, you can put extra information on the cover
% page as needed:
% \ifCLASSOPTIONpeerreview
% \begin{center} \bfseries EDICS Category: 3-BBND \end{center}
% \fi
%
% For peerreview papers, this IEEEtran command inserts a page break and
% creates the second title. It will be ignored for other modes.
\IEEEpeerreviewmaketitle

\section{Introduction}
% The very first letter is a 2 line initial drop letter followed
% by the rest of the first word in caps.
% 
% form to use if the first word consists of a single letter:
% \IEEEPARstart{A}{demo} file is ....
% 
% form to use if you need the single drop letter followed by
% normal text (unknown if ever used by the IEEE):
% \IEEEPARstart{A}{}demo file is ....
% 
% Some journals put the first two words in caps:
% \IEEEPARstart{T}{his demo} file is ....
% 
% Here we have the typical use of a "T" for an initial drop letter
% and "HIS" in caps to complete the first word.

\IEEEPARstart{V}{ehicle}-to-Infrastructure (V2I) communication technologies show promises in improving traffic in urban areas during rush hours~\cite{johnson2016connected} and assisting the transportation management center (TMC) in responding to emergency situations~\cite{CVNYC,CAVEmergencyNeeds,kitchener2018connected}. While the V2I data about vehicle movement and status can be used for signal control to ameliorate traffic congestions, the data containing sensing information about road emergencies (e.g., incidents or work zones) can help the TMC and emergency responders allocate rescuing resources, plan routes~\cite{kitchener2018connected}, and determine if it is necessary to create geofences around incident sites~\cite{chowdhury2018lessons,EmergencyNotification}. Previous studies show that information about road incidents sent from vehicles to public safety answering points (PSAP) contributes to reducing the response time of emergency vehicles for achieving a low mortality rate~\cite{obenauf2019impact}.

For a traffic event to be broadcasted through V2I channels, the information regarding the time, location, severity, and surrounding environment of the event is crucial for the TMC to make correct decisions~\cite{rostamzadeh2015context}. However, it is challenging for a roadside unit (RSU) to verify a nearby V2I event in real-time before disseminating it to the TMC due to the existence of malicious vehicle nodes. For example, an adversary who holds valid vehicle credentials can fabricate vehicle identities and have compromised vehicles send fake V2I messages to infrastructure~\cite{feng2018vulnerability,islam2018cybersecurity,suo2020location}. 

\begin{figure*}[tb!]
\centering
\includegraphics[width=0.9\textwidth]{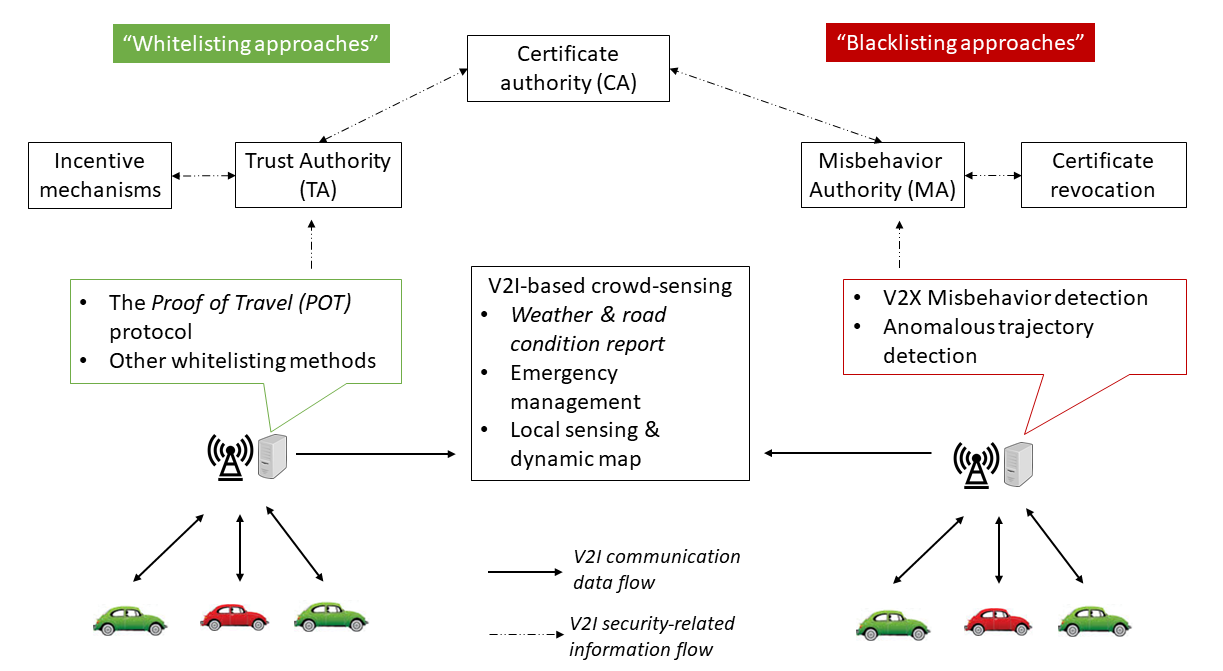}
\caption{The role of the Proof of Travel protocol in the V2I security landscape.}
\label{fig1}
\end{figure*}

Previous work on V2X security focuses on eliminating misbehaving vehicles by detecting malicious Vehicle-to-Vehicle (V2V)~\cite{so2018integrating,van2018survey}, V2I~\cite{huang2021data}, or vehicle-to-cloud (V2C) messages~\cite{belhadi2020trajectory}. Additionally, any connected vehicle can be a "witness" and share misbehavior reports with the misbehavior authority (MA)~\cite{kamel2020simulation,mahmoudi2020towards}. The MA can then decide whether to revocate V2X certificates assigned to a given vehicle based on the misbehavior reports, as shown in Fig.~\ref{fig1} (the right part).

Although these approaches are proved to be effective in detecting and reporting malicious behaviors to the MA, there may not exist enough "witness" vehicles with V2X connectivity in the early stage of connected vehicle deployment~\cite{alnasser2019recommendation}. More importantly, owners of connected vehicles need incentives to opt-in for V2X services and share their observations. In general, previous work does not answer another fundamental question regarding the adoption of V2I services: How can we incorporate incentive mechanisms into V2X protocols to encourage each vehicle to share its own data about vehicle movement, traffic events, and trust evaluations for other vehicles?

As a complement to the previous "blacklisting approaches" that pay more attention to eliminating misbehavior by detection, this paper proposes a Proof-of-Travel (POT) protocol to simultaneously tackle the issues of incentive designs and the trustworthiness of V2I-reported data. The POT protocol is essentially a "whitelisting approach" in the sense that it focuses on "impeding" misbehavior by increasing the cost of being malicious, as shown in Fig.~\ref{fig1}. At the same time, the POT protocol can incentivize connected vehicle adoption and V2I data exchange by enabling normal vehicles to gain a reputation from the travel proof. The desire to maintain a high reputation score, which is derived from cryptography mechanisms embedded within the POT protocol, can then serve as the incentive to promote the sharing activities of V2I data. While building a chain of proofs of spatial movement creates burdens for malicious vehicles to launch attacks, acquiring proofs does not result in extra costs for normal vehicles, which naturally want to move from the origin to the destination. Compared to our previous work in~\cite{suo2020proof}, this paper makes extensions and contributions in the following aspects.

\begin{itemize}
    \item Setting up the trust foundation for incentivizing V2X adoption. Previously, participants for V2X testing activities are often chosen on a voluntary basis. For example, a traditional vehicle model whose owner wants to opt in must be installed with aftermarket onboard units (OBUs) supporting wireless vehicular communication~\cite{TampaParticipant2018}. Therefore, the automotive and transportation industries need to find effective and secure ways of engaging more stakeholders and customers before V2X technologies become ready for mass deployment. Inspired by the altruistic behaviors of users to gain reputational points in crowdsensing platforms, we propose a proof of travel protocol (POT) to measure each vehicle's contributions to vehicular networks in a trustworthy manner. The proposed POT protocol transforms a vehicle's movement into reputational points by using cryptographic techniques such that it is impossible for an adversary group to gain a reputation unless they are willing to pay the extra cost incurred by "compulsory" spatial movement. Unlike previous proof-based consensus protocols~\cite{liu2018blockchain,liu2019blockchain}, which introduce computational burden for both normal and malicious nodes, the voting algorithm building on the POT protocol aims to prevent malicious nodes from gaining disproportional influence on the network by increasing their burdens of spatial movement without incurring extra cost for normal vehicles. As a result, adversaries will lose interest in tampering with or fabricating V2I data if the cost of these malicious behaviors is greater than the rewards they earn, which is justified by using probabilistic analysis.
    \item Improving previous voting-based data validation algorithms by using the proposed POT protocol. While both the traditional plurality voting and the POT-enhanced voting algorithms can be used to verify the authenticity and accuracy of traffic events shared by nearby vehicles in real time, the latter is more robust. Specifically, the POT-enhanced voting prevents the trust mechanisms from being abused by adversaries with forged, incorrect, or inaccurate information about traffic and roads~\cite{rostamzadeh2015context}. With reasonable assumptions, we prove that all vehicles choosing to cheat is not a pure Nash equilibrium based on game-theoretic analysis.
    \item Evaluating the security and the performance of the POT-enhanced data validation algorithm through a simulation study. For security, the simulation results suggest that the proposed POT-enhanced voting algorithm is more robust to malicious V2I reports than the benchmark algorithm. When the POT protocol is used, at least a 10$\%$ decrease in the percentage of invalid events can be achieved with reasonable voting parameters. Additionally, the simulation results support the tradeoff decision between the event criticality (how much tolerance we have regarding event correctness) and the timing constraint (how quickly the event needs to be confirmed and routed) for different types of V2I-based crowdsensing applications.
\end{itemize}

This paper is organized as follows. Section II presents previous work on the incentive design for V2X communication and the security mechanisms for validating V2I data. In Section III, we present assumptions about adversaries and the vehicular network. The detailed procedures of collecting proof of travel by vehicles and V2I voting game for traffic event validation are given in Sections IV and V. In Sections VI and VII, we perform security analysis on the proposed POT protocol and the POT-enhanced voting algorithm and evaluate their security and performance through a simulation study. Section VIII discusses issues regarding real-world deployment. Section IX concludes by summarizing the main results and presenting unresolved issues in the POT protocol as future work.

\section{Previous work}
Previous work on the incentive designs and the trustworthiness of V2I-reported data disseminated in vehicular networks are often closely related. Specifically, there exist security mechanisms designed for two stages in the V2I data life cycle: the process of data transmission between vehicles and infrastructure and the process of data recording in distributed ledgers maintained and shared among infrastructure components, as summarized in Table I.

\begin{table*}[!htb]
\centering
	\caption{Previous work on the incentive designs and the trustworthiness of V2I data dissemination and recording}
		\begin{tabularx}{\textwidth}{|X|X|X|X|X|X|X|}
		    \hline
            &\multicolumn{3}{c}{\textbf{Data transmission between vehicles and RSUs}}&\multicolumn{3}{c}{\textbf{Data recording in distributed ledgers shared among RSUs}} \\
			\hline
			\textbf{Authors}&\textbf{Incentives for vehicles sharing data}&\textbf{Trust on the vehicle itself}&\textbf{Authenticity of vehicle-shared data}&\textbf{Incentives for RSUs recording data}&\textbf{Trust on RSUs}&\textbf{Consistency and Tamperproofing of data records} \\
			\hline
			\hline
			Liu et al. ~\cite{liu2018blockchain}&Data coin based on data contribution frequency&&Verified by RSUs, no details&&& POW, RSUs winning in POW as validators\\
			\hline
			Kang et al. ~\cite{kang2019toward}&&&Verified by RSUs, no details&Encourage high-reputation miners, i.e., RSUs, to participate based on contract theory&Aggregation of trust ratings by nearby vehicles&Delegated POS, RSUs with higher reputation as validators\\
			\hline
			Li et al. ~\cite{li2018creditcoin}&Data coins by forwarding V2X message&The amount of coins owned&Voting algorithms with Byzantine fault tolerance&&&\\
			\hline
			Yang et al. ~\cite{yang2018blockchain}&&Aggregation of trust ratings by nearby vehicles&Voting algorithms based on Bayesian inference&&&POW and POS, RSUs with higher stake more likely to win in POW\\
			\hline
			Liu et al. ~\cite{liu2019blockchain}&&Aggregation of trust ratings by nearby vehicles and false message rate&Identity-based group signature&&&POW, RSUs winning in POW as validators\\
			\hline
			Chang et al. ~\cite{chang2011footprint}&&Path of physical movement attested by timestamp Location certificates issued by RSUs&&&Assuming that RSUs are fully trustworthy&\\
			\hline
			Park et al. ~\cite{park2013defense}&&&Timestamp location certificates issued by RSUs&&Assuming that RSUs are fully trustworthy&\\
			\hline
			Liu et al. ~\cite{liu2019byzantine}&&Proof of knowledge shown by vehicles&Voting algorithms with Byzantine fault tolerance&&&\\
			\hline
			Baza et al. ~\cite{baza2020detecting}&&Proof of work shown by vehicles&Path of physical movement attested by timestamp Location certificates issued by RSUs&&Relying on threshold signature (from multiple RSUs) as one RSU can be compromised& \\
			\hline
                Falco et al. ~\cite{falco2020distributedblack}&&Distributed hash table&Redundency in the distributed hash table&OEMs to protect data and secure OTA update &Internal server owned by OEMs&POW, full blockchain node run by OEMs or fleet manager \\
                \hline
                Chan et al. ~\cite{chan2021towards}&&In-vehicle network blockchain&Hash record stored on blockchain&&& \\
                \hline
			\textbf{This paper}&Reputation or stake earned by proof of vehicle movement &Path of physical movement attested by timestamp Location certificates issued by RSUs&Voting algorithms in which eligibility to vote is determined by proof of travel&&RSU semi-trusted&\\
			\hline
		 \end{tabularx}
		\label{tab1}
\end{table*}

For the process of data transmission between vehicles and infrastructure, previous work has discussed incentivizing each connected vehicle to share the data regarding its status, movement, and observations of the surrounding environment by digital rewards (e.g., digital coins)~\cite{liu2018blockchain,li2018creditcoin}. However, rather than determining the reward based on the data transmission frequency\cite{liu2018blockchain} or the number of times a vehicle participates in data transmission and forwarding activities\cite{li2018creditcoin}, our paper proposes the idea of using a vehicle's distance traveled to measure its contributions and thus determine its rewards. One unique feature of the distance value in the proposed POT protocol is that the vehicle miles traveled is verifiable as the corresponding distance is derived from a chain of location proofs the vehicle has acquired from RSUs along its path of movement~\cite{falco2020distributedblack}. 

After an RSU receives data shared by a vehicle, the RSU needs to verify the digital identity of the vehicle for Security considerations. In addition to authentication based on cryptography methods, such as message authentication code and digital signatures~\cite{zhang2012vehicle}, there are two trust-based methods for identity verification based on vehicle reputation. First, the digital rewards earned by the vehicle can be used to determine its reputation as the value measures the contribution the vehicle makes to data-sharing activities after it joins the network for a period of time. This type of method can suffer the potential of Sybil attacks in which an inside adversary use stole vehicles or compromise vehicle credentials to fabricate multiple digital identities such that the adversary can gain a disproportionately stake or influence on the network~\cite{chaubey2020taxonomy}. One solution to prevent identity spoofing is to require the vehicle to show location proofs, which are issued by RSUs the vehicle has interacted with during the path of its movement~\cite{chang2011footprint,park2013defense,baza2020detecting}. The second way is to have a trusted infrastructure aggregate rating of trust evaluation from nearby vehicles with which the target vehicle has exchanged data to derive a total reputation score~\cite{yang2018blockchain,liu2019blockchain}. This type of method assumes a certain level of the penetration rate of vehicles with V2X connectivity.

If vehicle-shared data contains information about a traffic event with a high criticality level (e.g., incidents) and requires emergency responses, an RSU needs to verify the authenticity of these data before disseminating and reporting them to the transportation management center. In addition to relying on cryptography-based approaches~\cite{liu2018blockchain,kang2019toward,liu2019blockchain} or redundancy by relying on heterogeneous channels~\cite{falco2020distributedblack}, the RSU can also utilize voting algorithms that are Byzantine fault tolerant to determine the event accuracy and authenticity when it receives multiple V2I messages regarding the same traffic event~\cite{liu2019byzantine}. Our paper develops a voting algorithm in which the weight of each vote corresponds to the sender's reputation score determined by its previous proof of travel behaviors, which is different from previous voting-based approaches for V2I message authenticity~\cite{li2018creditcoin,yang2018blockchain,liu2019byzantine}.

Although the process of V2I data recording (in Table I) is not the focus of this paper, we summarize previous work to give readers a complete view of the V2I data life cycle. In particular, recent work recommends the use of distributed ledgers shared among RSUs to store V2I-shared data for future use, such as post-event investigation by law enforcement agencies~\cite{cebe2018block4forensic,guo2018blockchain}. For this reason, we need to have mechanisms for ensuring data consistency between different ledgers (i.e., sequence of data records) and preventing compromised RSUs from adding fabricated data records or tampering with the ledgers. Previous work relies on the proof-of-work (POW)~\cite{liu2018blockchain,liu2019blockchain}, proof-of-stake (POS) protocols~\cite{kang2019toward}, or the combination of both~\cite{yang2018blockchain} to select validators among RSUs, which construct, audit, and propose new data blocks to be written into distributed ledgers. It is worth mentioning that the issues of RSUs' incentives of contributing their computation and storage resources for data recording~\cite{kang2019toward} and the determination of trust in RSUs have not been fully explored and deserve more attention.

\begin{figure*}[tb!]
\centering
\includegraphics[width=0.8\textwidth]{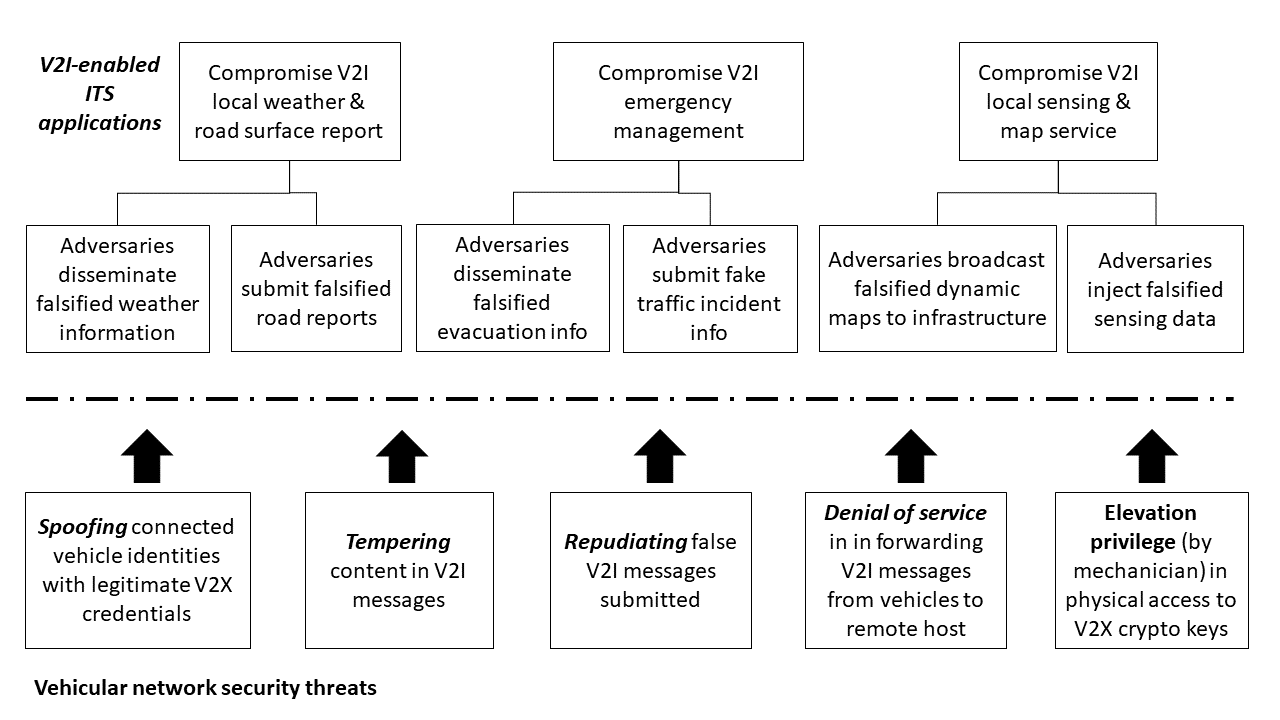}
\caption{The attack model of V2I-enabled crowdsensing applications.}
\label{attack_model}
\end{figure*}

\section{Attack models and system assumptions}

The POT protocol targets adversaries who want to compromise V2I-based transportation applications that utilize crowdsourced information from multiple vehicles. In particular, we are interested in the inside adversary who can indirectly compromise the V2I services by spoofing fake vehicle identities and submitting falsified events or sensing data. Also, the adversary can spoof multiple digital identities to flood the RSUs with bogus information. These attacks become possible if someone can get easy access to cryptography materials embedded within the onboard units (OBUs), such as "evil mechanics"~\cite{monteuuis2018attacker}. Due to the existence of after-market devices during the early phase of V2X deployment, it is straightforward for someone to gain expertise in manipulating V2X OBUs~\cite{suo2020location}. For example, the local transportation agency in Tampa, Florida in the U.S. was trying to hire instructors and students from a local college through a paid internship to install over 1,000 OBUs in vehicles that participated in the connected vehicle pilot program~\cite{tempa}.

Additionally, we assume that adversaries are motivated by profit-seeking behaviors, i.e., adversaries can gain economic benefits from fabricating vehicle identities and spoofing V2I data with false traffic events. We assume that adversaries are rational in an economic sense: adversaries will give up launching attacks if the rewards of being malicious are less than the cost incurred by following the POT protocol. From the perspective of game theory, each vehicle, whether it is normal or malicious, always tries to maximize its payoff. 

This is because the proposed POT protocol is a "whitelisting" approach that impedes adversaries by increasing the cost of launching attacks, rather than trying to detect malicious behaviors or eliminating adversaries. For this reason, the only prerequisite for each adversary to launch the attacks discussed earlier is that the attacker must be in close proximity to the target RSUs (i.e., within RSUs' communication range) to acquire location signatures (a concept that will be presented later) in a sequential manner. 

To further illustrate the relations between the attacks and transportation applications, we present attack trees for V2I-based transportation services that rely on crowdsource sensing information shared by CVs, including road surface or local weather notifications, traffic event and emergency management and local sensing and dynamic map services, as shown in Fig.~\ref{attack_model}. They are derived based on the architecture designs for the communication between connected vehicles and the roadside computing platforms proposed by the U.S. Department of Transportation (DoT)~\cite{chang2017overview}. 

\paragraph{Weather and road surface conditions report through crowd-sensing}
Since weather and road surface conditions can influence traveler’s decisions, pavement friction, and vehicle performance and stability, it is beneficial for both transportation agencies and individual travelers to get precise and dynamic views of road surface and environmental conditions along all road segments~\cite{chang2015estimated}. Other than relying on fixed-position roadside infrastructure-mounted sensors to monitor road weather conditions, transportation agencies may also leverage information from mobile sources. In regions where infrastructure-mounted sensing devices only provide limited coverage of observed areas, CV-based crowdsensing can provide timely information about the weather and road surface conditions that change dynamically. Previously with smartphones, citizens could download apps to report real-time weather and road surface conditions to transportation agencies or weather service stations~\cite{gopalakrishna2013utah}. 

In the near future, as the adoption of CVs increases,  we can leverage V2I communication to use CVs as sensors for Spot Weather Information Warning (SWIW) application~\cite{hadi2019connected}. One example is the such as the Weather Data Environment program (WxDE) in which environmental data collected by CVs can be shared with the transportation management center through V2I channels~\cite{WxDE}. There exist industry standards that define operating principles for V2I-based weather applications. For example, the society of automotive engineers (SAE) has published J2945/3, which defines interface requirements between vehicles and infrastructure for weather applications~\cite{SAE_J2945}. 

While providing assistance to travel decisions or transportation management, the services can be compromised if the information shared by CVs is falsified or intentionally fabricated. Although the security vulnerability might not directly lead to safety incidents, the inconvenience due to mistakenly or even maliciously reported weather and road information can result in customer dissatisfaction and destroy people’s confidence in connected vehicle technologies.

\paragraph{Transportation emergency management}
Similar to weather and road services, transportation agencies can use emergency event information shared by CVs to make decisions on sending professional teams for rescuing and medical services. More importantly, the information is crucial for evacuation plans in the events of natural or man-made disasters~\cite{hadi2019connected}. For example, local DoTs in the U.S. have been exploring the use of connected vehicle data to enhance their situational awareness of real-time traffic and the actual path of hurricanes such that citizens can be notified during the evacuation process~\cite{cv_hurricane1}. Failures in effective planning or selecting efficient routes due to erroneous or even malicious information can result in delays in evacuation or loss of life. 

\paragraph{Local sensing and dynamic map services} An important V2I application is to fuse sensing information from different traffic participants to form a “bird’s eye view” of safety-critical areas, such as intersections. For example, the standard on the data format and communication protocols for sharing and fusing individual sensor information for building Local Dynamic Map (LDM) services were discussed based on ad hoc vehicular networks~\cite{ETSI_LDM}. Vehicles can then rely on the local dynamic map (LDM) services provided by RSUs for collective perception and detecting objects in blind spots~\cite{ETSI_CP}. In the future, as the adoption of 5G continues to grow, we can also leverage the high bandwidth enabled by millimeter-wave communication for sharing raw data between vehicles and infrastructure. Vehicles that are installed with high-definition (HD) sensors, such as LiDAR, can upload raw sensing data to edge servers~\cite{5gaa_v2x}. Network operators can then leverage such crowd-sourced sensing information to build and update HD maps in real time. These efforts on standardization and cyber-infrastructure development open new doors to building new applications for vehicle perception and traffic control~\cite{3gpp_v2x}. 

Although the timing constraints for building HD maps need not be enforced with hard timing constraints as we did for collision avoidance applications, falsified crowd-sourcing data can compromise map quality and may eventually raise safety concerns. While unauthorized sensing data contributors can be eliminated by traditional security mechanisms (e.g., a firewall) installed on the edge servers, it is difficult to filter out malicious sensing data from vehicles that hold valid V2X credentials. An inside adversary may tamper with HD sensor data or inject noises to transform information attached to the shared data. 

\begin{figure*}[tb!]
\centering
\includegraphics[width=0.9\textwidth]{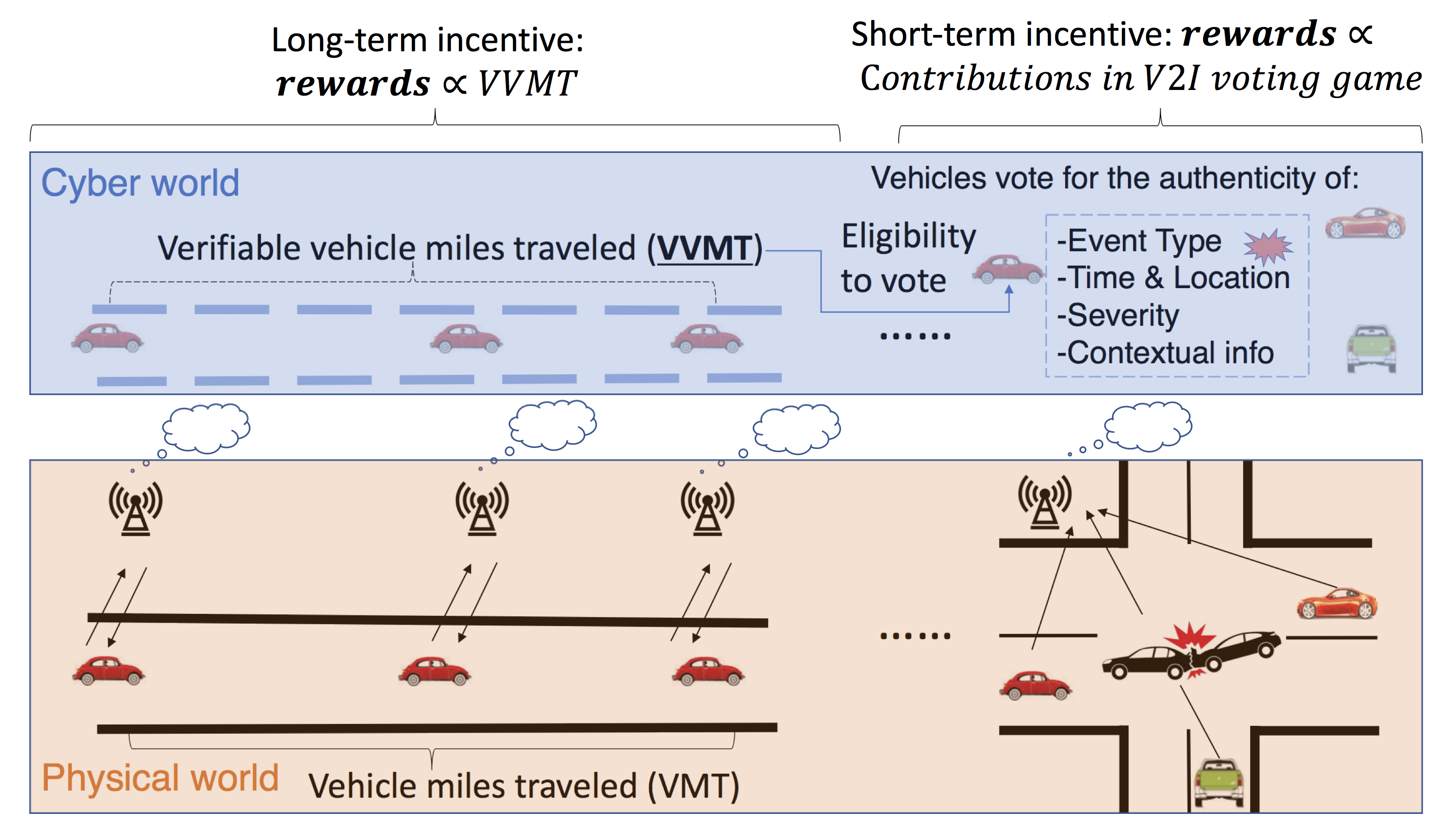}
\caption{The proposed long-term and short-term incentive designs for trustworthy V2I communication.}
\label{fig2}
\end{figure*}

\section{A Proof of Travel protocol}
\subsection{Overview of the incentive designs}
The mechanisms presented in this paper are motivated by incentive designs in smartphone-based crowdsource sensing platforms~\cite{restuccia2016incentive} in digital social networks, such as waze, a mobile app that ranks users based on digital points each one owns~\cite{Wazeranking}. Previous studies show that reputational rewards, the desire to be regarded as respectable due to altruistic behavior~\cite{ariely2009doing}, can motivate user participation in content-sharing activities. This implies that it is possible to leverage the desire of vehicle owners to gain a reputation for encouraging them to opt-in to V2I services.

Similar to social networks, each entity in vehicular networks may have the incentive to build reputation, which determines its power and stake in the network in the long run~\cite{yu2019repucoin}. This section discusses how we can utilize cryptography techniques, such as public-key infrastructure, hash functions, and digital signatures, to build communication protocols between vehicles and infrastructure to allow vehicles to build a chain of travel poofs issued by infrastructure components. The accumulated proofs assigned to a vehicle represent its contributions since it joins the network and can be used to measure its reputation.

%\textcolor{red}{Summarize the contribution we made and how we extend our conference paper}
%referes to "CAIAUTH: Context-Aware Implicit Authentication When the Screen Is Awake" for how to present work.

The proposed protocol that enables the proof-accumulation process between vehicles and infrastructure aims to encourage each vehicle to share V2I data about its movement and status (e.g., position, speed, and acceleration, etc.) along its path of movement by rewarding each vehicle with digital points, as shown on the left in Fig.~\ref{fig2}. To hold more “stake” and earn reputational points in the vehicular network, a vehicle must construct a chain of proofs for its spatial movement. Specifically, the vehicle may request proof of its presence in a given location at a particular time by sending its movement status and credential information whenever it meets a roadside unit (RSU). Since each proof is linked to its previous one through the hash function, a type of cryptography mechanism, these concatenated proofs become tamper resistant and form a chain of trust. While building a chain of proofs of spatial movement creates burdens for malicious vehicles whose only goal is to compromise the system, acquiring proofs does not result in extra cost for normal vehicles, which naturally want to move from the origin to the destination. It is this asymmetric cost in traveling that helps the proposed POT protocol to mitigate malicious vehicles that try to gain adversarial rewards.   

These reputational points that each vehicle gained from travel proof can be used for data validation since an RSU may aggregate information about a traffic event from different vehicles in the surrounding region before disseminating the event to the TMC, as shown on the right in Fig.~\ref{fig2}. Compared to previous work on plurality voting-based methods for achieving Byzantine fault tolerance in distributed systems~\cite{liu2019byzantine}, the security of the voting game we proposed in this paper is strengthened as only vehicles which have gained a pre-determined amount of reputational points are eligible to vote. This is crucial for efficient transportation emergency management as the TMC must have accurate and correct data about traffic events for allocating rescuing and other necessary resources to event sites.

\subsection{Preliminaries}
POT defines the V2I message format and the communication procedure for a vehicle to acquire location proofs, titled location signature (formally defined later), from each RSU along its path of movement. The chain of proofs held by a vehicle testify both the vehicle's claimed trajectory and its contributions, forming the foundations for building incentive mechanisms for V2X services. We give core definitions in the POT protocol before presenting detailed communication procedures. 

\begin{table}[t]
	\caption{Notation used in the proof collection}
		\begin{tabularx}{\columnwidth}{c|X}
			\hline
			\textbf{Symbol}&\textbf{Meaning} \\
			\hline
			\hline
			$v_i$&A given vehicle denoted as i in vehicular networks  \\
			\hline
			$s^i_t$&The status of vehicle i at time t \\
			\hline
			$pk_{v_i},sk_{v_i}$&public and private key pairs held by vehicle i \\
			\hline
			$\sigma_{v_i}$&Digital signature of vehicle i\\
			\hline
			$\sigma_{rsu_j}$&Digital signature of rsu j\\
                \hline
			$pk_{rsu_j},sk_{rsu_j}$&public and private key pairs held by RSU j \\
			\hline
			$h_e$&The hash value of the information ($e^i_t$) reported by vehicle i \\
			\hline
			$ls^{<t>}_{i,j}$&Location signature issued by RSU j to vehicle i at time t \\
			\hline
			$h_{pre}$& The hash of the location signature $v_i$ acquired from the previous RSU along its path of movement \\
			\hline
		 \end{tabularx}
		\label{tab:POT_procedure}
\end{table}

\textit{Definition 5.1} A \textit{location signature} (denoted as $ls^{<t>}_{i,j}$) issued by a RSU (denoted as $rsu_j$) to vehicle $v_i$ at time $t=t_k$ is defined as 
\begin{center}
$pk_{rsu_j}||pk_{v_i}||t_k||h_e||h_{pre}||\sigma_{rsu_j} (pk_{rsu_j}||pk_{v_i}||t_k||h_e||h_{pre})$, 
\end{center}

After receiving vehicle i's request for location signature, $rsu_j$ will check i's digital signature $\sigma_{v_i}$ and sign on the content within the request to generate the RSU's corresponding signature $\sigma^{t_k}_{rsu_j}$, which testifies the vehicle i's presence in a particular location at a given time.

The definition presented here is different from the previous work~\cite{chang2011footprint,baza2020detecting} researching vehicle location as proof. In particular, the contents signed by the RSU contain the hash of the previous location signature $v_i$ collects earlier, and the hash of vehicle reported events $h_e$. The former ensures any changes in a location signature will result in the inconsistency between the hash of that particular signature and the pre-hash value in the next signature in the same chain of proofs, preventing trajectory forgery. The hash value of events $h_e$ plays the role of “admitting” the contribution made by $v_i$ as the shared events can be used by the TMC and other vehicles, as mentioned earlier. 

\textit{Definition 5.2} \textit{Proof of Travel} for vehicle $v_i$ is the chain of location signatures = {$ls^{t_0}_{i,j},ls^{t_1}_{i,j+1},...,ls^{t_T}_{i,j+T}$} that $v_i$ acquired from RSUs along the path of its movement during the time interval $T$.

\subsection{Detailed Communication Procedure}
The POT protocol consists of three stages. A vehicle will start to acquire and accumulate proofs from RSUs after it joins the vehicular network, and the process continues until it exits the road with V2X coverage. The reputation and reward of each vehicle participating in V2I communication activities (indicated by VVMT) are then determined by the length of the chain of valid proofs the vehicle collects.

\subsubsection{Stage-1: Initial proof generation}
The proof-collection process begins with a vehicle ($v_i$) sending a location-signature request $req^{t_1}_{v_i} = pk_{v_i}||t_1||e_i^{t_{1}}||pos_i^{t_{1}}||\sigma_{v_i} (pk_{v_i}||t_1||e_i^{t_{1}}||pos_i^{t_{1}})$ to the first RSU ($rsu_1$) it meets after joining the vehicular network, as shown in Fig.~\ref{fig3}. The request consists of the identity and authentication information of $v_i$, such as its encoded public key $pk_{v_{i}}$ and the digital signature signed on this request $\sigma_{v_i} (.)$, and the observed traffic events $e_i^{t_{1}}$ or its own real-time position $pos_i^{t_{1}}$.

After receiving the request, $rsu_1$ will authenticate $v_i$'s identity and the integrity of the request message by using the attached digital signature $\sigma_{v_i} (.)$. It may also check the plausibility of the information included in the request, such as $e_i^{t_{1}}$ or $pos_i^{t_{1}}$, by using pre-defined heuristic rules. For example, if the location of the event reported by the vehicle is far away from the vehicle’s own location, or the vehicle is impossible to pass the event site along its path, $rsu_1$ may reject the request.

If results from all the identity, integrity, and rule-based plausibility checks turn out to be valid, $rsu_1$ will generate a location signature $ls^{t1}_{v_i,rsu_1}$ and send it back to $v_i$. 

\begin{figure*}[tb!]
\centering
\includegraphics[width=0.7\textwidth]{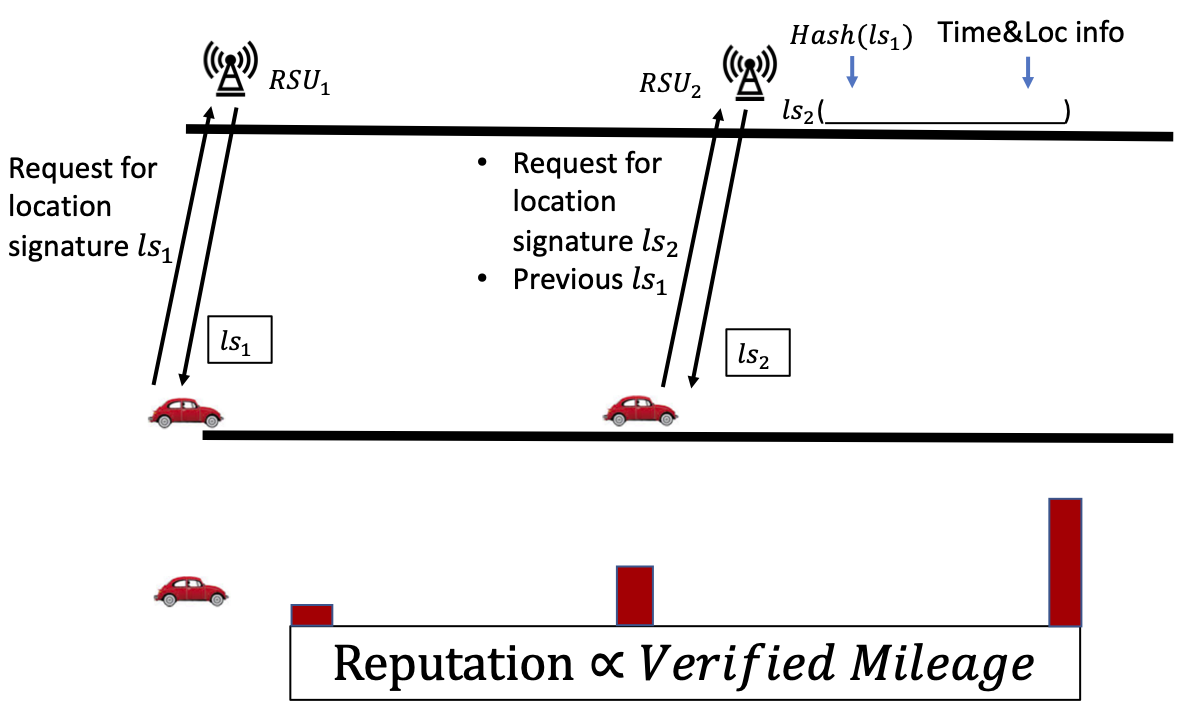}
\caption{The process of building Proof of Travel by vehicles.}
\label{fig3}
\end{figure*}

\subsubsection{State-2: Trajectory-encoded proof collection}
When vehicle $v_i$ meets the next RSU $rsu_{2}$, it will attach the location signature $ls^{t1}_{v_i,rsu_1}$ acquired from $rsu_{1}$ when sending a new request $req^{t_2}_{v_i}$, as shown in Fig.~\ref{fig3}. Similarly, in addition to verifying the new request based on the heuristics described earlier, $rsu_{2}$ will also check if the previous location signature $ls^{t1}_{v_i,rsu_1}$ is owned by $v_i$, the vehicle sending the request (ownership checks), has not expired (time of validity), and is issued by a legitimate RSU (legitimacy checks), such as a valid RSU who has registered with the trust authority and is adjacent to or near $rsu_{2}$. 

If all checks are valid, $rsu_{2}$ will construct a new location signature $ls^{t2}_{v_i,rsu_2}$. However, other than concatenating all the elements as discussed earlier, $rsu_{2}$ will also attach the hash value of the previous location signature $ls^{t1}_{v_i,rsu_1}$, sign on the merged data, and send it back to $v_i$. This process repeats until the vehicle has collected enough location signatures to form a chain of proofs, which are geo-time-stamped ledgers of $v_i$'s trajectory history verified by all RSUs along its path of movement.

\subsubsection{Stage-3: Determine vehicle stake or reputation based on POT}
The TA or any RSU can rely on the chain of proofs owned by $v_i$ to determine its reputation (and the reward). We propose verifiable vehicle miles traveled (VVMT), a concept derived from the proofs of location signatures a vehicle holds, to quantify the reputation and the stake the vehicle has in the network.

VVMT, which incentivizes V2I data sharing by quantifying a vehicle's contributions to improving traffic conditions, is quite different from vehicle miles travel (VMT), which is a concept used extensively in transportation planning, such as travel demand forecasting and dynamic pricing for toll lanes~\cite{williams2016methodologies}. For example, a vehicle traveling more distance on a congested road during peak hours will be charged higher tolling fees because the vehicle is regarded as contributing to the congestion. In this regard, VMT plays the role of a “negative” incentive to discourage vehicles from using a road over a particular time period, while VVMT serves as a positive incentive to encourage the use of connected roads with V2X connectivity.

\textit{Definition 5.3} \textit{Verifiable vehicle miles traveled} (VVMT) for a vehicle, denoted as $v_i$, is defined as a function of all proofs of location signatures $v_i$ has collected since it joined the network, as given in eq.~\ref{eq:vvmt_equ}

\begin{equation}\label{eq:vvmt_equ}
%\begin{split}
	vvmt^{T}_{i} = f(LS^{T}_i,n)
%\end{split}
\end{equation}

Based on \textit{Definition 5.3}, we present two equations for deriving the vehicle's corresponding reputation scores, including the \textit{vanilla VVMT} (eq.~\ref{eq:vanilla_VVMT}) and the \textit{resilient VVMT} (eq.~\ref{eq:flexible_vvmt}) equations.

\paragraph{Vanilla VVMT}

\begin{subequations}\label{eq:vanilla_VVMT}
\begin{align}
vvmt^{V}_{i} &= \sum_{t=1}^T \gamma d(ls_i^{t-1},ls_i^{t}) \label{eq:vanilla_VVMTa}\\
vvmt^{V}_{i} &= \frac{M}{1+e^{-k(\sum_{t=1}^T d(ls_i^{t-1},ls_i^{t})-m)}}   \label{eq:vanilla_VVMTb}
\end{align}
\end{subequations}

In the \textit{vanilla VVMT} equation in~\ref{eq:vanilla_VVMT}, a vehicle's reputation score only depends on the physical distance the vehicle has traveled. The rationale behind this design is that since a vehicle can only accumulate the whole chain of location signatures by being in close proximity to every RSU along its path of movement, we may trust the vehicle by assigning the reputation score corresponding to its moving distance. 

Different functions can be used to realize the \textit{vanilla VVMT} for controlling how fast or difficult the vehicle can earn reputation points as it moves. Eq.~\ref{eq:vanilla_VVMTa} indicates a linear relationship between reputation and distance. $d(ls_i^{t-1},ls_i^{t})$ denotes the user-defined distance between two location signatures, such as euclidean distance or the actual distance $v_i$ has traveled. 

On the other hand, eq.~\ref{eq:vanilla_VVMTb} means that the vehicle's reputation scores grow exponentially. For the exponential form in eq.~\ref{eq:vanilla_VVMTb}, while $M$ is the scaling factor that determines the maximum reputation score a vehicle can earn, $k$ and $m$ serve the role of controlling the difficulty of gaining reputation. This design rationale is inspired by existing Blockchain-based cryptocurrencies in which each "miner" or contributor might have different difficulty levels when mining blocks to earn rewards.

\paragraph{Resilient VVMT}
\begin{subequations}\label{eq:flexible_vvmt}
\begin{align}
%\begin{split}
	vvmt^{R}_{i} = \alpha \sum_{t=1}^T \gamma d(ls_i^{t-1},ls_i^{t}) + (1-\alpha)\frac{n}{n_{max}} M \\
        vvmt^{R}_{i} = \alpha \frac{M}{1+e^{-k(\sum_{t=1}^T d(ls_i^{t-1},ls_i^{t})-m)}} + (1-\alpha)\frac{n}{n_{max}} M 
%\end{split}
\end{align}
\end{subequations}

A prerequisite for applying the \textit{vanilla VVMT} equation is that the vehicle has to successfully collect all the location signatures from every RSU along its path of movement. However, this requirement can be difficult to meet due to communication failure or packet loss. To make the POT protocol more flexible and resilient by accounting for the possibility of missed location signatures during vehicle movement, we present the \textit{resilient VVMT}, as given in eq.~\ref{eq:flexible_vvmt}. Similar to the \textit{vanilla VVMT}, the \textit{resilient VVMT} is realized by both the linear and logistic versions.

When \textit{resilient VVMT} is applied, the value of $v_i$'s VVMT is determined by both the distance $v_i$ traveled and the ratio of the actual location signatures ($n$ in eq.~\ref{eq:flexible_vvmt}) the vehicle has collected to the total number of RSUs ($n_{max}$) along the road segment. The user-defined parameter $\alpha$ determines the ratio of contribution to $v_i$'s VVMT from each of the two aspects above. A case study on the VVMT accumulation in section~\ref{sec:exp} will be presented to illustrate the influence of different functions and parameters on the vehicle's reputation in the \textit{vanilla VVMT} equation.

It is worth mentioning that, when deriving the VVMT for a vehicle, the TA may also check if all location signatures included in a chain of proofs indicate a plausible trajectory. This type of plausibility check can build upon the "blacklisting approaches" for detecting V2X misbehavior and anomalous trajectory~\cite{van2018survey,huang2021data,belhadi2020trajectory,so2019physical}, which adds another layer of security defense to the trust management in the V2X ecosystem at the price of higher computation cost. For example, multiple location signatures owned by a vehicle may indicate an extremely fast speed impossible to achieve under the current traffic and road conditions. Also, they may form a strange trajectory (e.g., taking a zigzag line even if a straight line is an optimal route), which will reduce the likelihood of previous proofs being valid. 

The rules for verifying proofs should also support fault tolerance in the case where $v_i$ fails to get the location signatures from two adjacent RSUs due to faults or congestion in communication links, as we did in deriving vehicle reputation scores earlier. For example, a threshold signature scheme can be used in authenticating vehicle trajectory~\cite{baza2020detecting} such that only a subset (m) of all RSUs' signatures (n, $m<n$) is needed for determining the legitimacy of location proofs presented by a vehicle. 

With VVMT, engineers can develop reputation-based algorithms for validating vehicle-reported traffic events with high criticality levels. The next section will devote to the description of these algorithms and relates them to digital rewards gained by vehicles as short-term incentives. 

\section{POT-enhanced voting game for V2I validation}
\subsection{Why do we need voting-based methods?}
The proposed POT protocol in the last section allows a vehicle to gain a reputation as the incentive for V2I adoption in the long run. However, as a person who acts rationally, the vehicle owner may want to know how much digital rewards (s)he can get immediately after sharing data through V2I communication. More importantly, some traffic events included in these V2I data have high criticality levels, such as road incidents with serious injuries, and often require immediate emergency responses and medical services~\cite{shen2004managing}. Since these vehicle-reported data can help the TMC evaluate the severity and the surrounding environment of the incident site for effective emergency planning and operation, the accuracy and the correctness of these data need to be first validated by RSUs in the edge layer to avoid mistakes in assigning emergency response resources~\cite{chen2018every}. 

Built upon vehicle reputation derived from proof of travel, we propose a V2I voting game for validating vehicle-reported traffic events. Specifically, an RSU will verify the authenticity and the correctness of a given event based on V2I reports from multiple vehicles near the event site. Any vehicle which wants to be eligible for participating in the V2I game must have at least a predetermined level of reputation and thus accumulate enough proof of travel as vehicles with a higher score of VVMT will be viewed as more trustworthy (i.e., a higher reputation score). 

The design that uses the reputation indicated by VVMT of a vehicle to determine whether its event report is valid can improve the voting algorithms' tolerance for malicious vehicles~\cite{yu2019repucoin}. The reason is that it is extremely difficult for adversaries in the malicious group to acquire a high value of VVMT during a short period of time through physical movement. Even if a malicious vehicle can delegate the VVMT it has acquired to a colluding node to increase the weight the latter one has in a voting game, the malicious coalition as a whole still needs to pay the cost of spatial movement, making malicious behaviors of reporting falsified events less attractive from an economic perspective. Section VI will give a detailed discussion of this dilemma situation for the malicious group from a game-theoretic perspective.

We present two algorithms (Algorithm 1 and 2) for validating vehicle-reported events in the V2I voting game and evaluate the running time of the algorithms as there are often timing constraints for emergency response actions. However, when V2I-reported data about traffic events are only used for post-event investigation~\cite{cebe2018block4forensic,guo2018blockchain}, data accuracy and correctness, rather than timing constraints, become the highest priority. In the latter scenario, incident data must be accurate and consistent such that they can be used by law enforcement agencies. Depending on which goal the algorithms want to achieve, engineers can adjust the parameter that controls the minimum number of eligible votes required for an RSU to confirm a V2I-reported traffic event (i.e., $N_{thld}$ in eq.~\ref{voting_Rule}).  

\subsection{Voting rules}

\textbf{Vote}: A vote representing a vehicle’s opinions on an event at time $t$ is denoted as $x_{i,k}^{<t>}$, where $i=\{1,2,...,n\}$ represents the ID of the vehicle and $k= \{incident, workzone, congestion, ...\} $ represents the type of the reported event. This paper adopts a generic form for each vote sent by a vehicle and lets $x_{i,k}^{<t>}=\{-1,1\}$. For example, $x_{i,k}=1$ means that a vehicle has observed and reported the occurrence of a traffic event that requires immediate responses by the TMC. When deploying the POT-based voting protocol in various V2I applications, $x_{i,k}^{<t>}$ can have more complex forms or even be a continuous variable. One example is that the RSU aggregates V2I messages shared by multiple vehicles in a local area to estimate the starting location of congestion (SLoC) and travel time (TT) of the work zone before disseminating the information to the TMC and nearby vehicles in the same road~\cite{maitipe2012development}.

\textbf{Voting rule}: Let $X$ be the set of votes an RSU receives from $n$ vehicles in a local region, a voting rule is defined as a \textit{social choice function}~\cite{shoham2008multiagent,leonardos2020weighted} $X^n \rightarrow O$, which maps aggregated profiles of vehicle’s opinions $X^n$ to an outcome in $O$. The form of an outcome is consistent with that of the votes an RSU receives. For binary voting, $O$ has a binary form (i.e.,$O=\{-1,1\}$), indicating that the RSU makes decisions on whether to disseminate the event to TMC or nearby vehicles or not. Therefore, $O=1$ denotes that the RSU confirms the occurrence of the event and will forward it to the TMC to take action on it, and vice versa. Mathematically, our proposed voting rule can be described by eq.~\ref{voting_Rule}. 

\begin{equation}\label{voting_Rule}
  \setlength{\arraycolsep}{0pt}
  O = \left\{ \begin{array}{ l l }
    1, \quad if \underset{x_{i,k}==1}{\sum_1^{N_{thld}}} x_{i,k}^{<t>} \geq (2/3)*N_{thld}\\
    -1, \, otherwise
  \end{array} \right.
\end{equation}

$N_{thld}$ in eq.~\ref{voting_Rule} corresponds to the "committee size" in the context of consensus algorithms for distributed networking and represents the minimum number of eligible voters required in our discussion of V2I-data validation. For the conventional plurality voting (CPV) that will be presented as the benchmark algorithm (Algorithm 1), every vote from any vehicle is regarded eligible, which means that $N_{thld}$ equals the total number of votes received from vehicles. On the other hand, when the POT protocol is used for enhancing voting algorithms, a vote will be regarded eligible only if it is sent by a vehicle that has collected enough proof of travel (i.e., the vehicle's $VVMT$ greater than a predetermined threshold $VVMT_{thld}$). Therefore, under the proposed POT-enhance plurality voting (PPV), $N_{thld}$ can be less than the total number of votes received by the RSU since some votes without enough travel proof will be disregarded. 

A natural way to realize the voting rules given in eq.~\ref{voting_Rule} is by using conventional plurality voting (CPV)~\cite{liu2019byzantine,leonardos2020weighted} in which every vehicle in the communication range of the RSU is eligible to share data (i.e., vote), as shown in Algorithm 1. 

Alternatively, we can propose POT-enhance plurality voting (PPV) in which each vehicle's proof of travel is used to determine whether votes sent by the vehicle are regarded as eligible, as shown in Algorithm 2.

Here, we define the set of eligible voters for CPV and PPV as $\mathcal{I}^{\text{CPV}}$ and $\mathcal{I}^{\text{PPV}}$, respectively. Specifically,
\begin{align}
&\mathcal{I}^{\text{CPV}} = \{i:\text{Some location-based constraints}\} \\
&\mathcal{I}^{\text{PPV}}(VVMT_{thld}) = \{i:\text{Some location-based constraints},\notag\\
&VVMT_i > VVMT_{thld}\} 
\end{align}

In both the CPV and PPV algorithms, each vote sent by a vehicle has equal weight. For the CPV, this rule implies that the RSU will confirm that an event has occurred in a given location at a particular time as long as more than two-thirds of all nearby vehicles attest to it. For the PPV, since the RSU only recognizes a vote sent by a vehicle with enough proof of travel, this prerequisite for voting places the adversary group in a disadvantageous situation as it needs to make extra efforts to gain travel proof. We will provide formal proof to show that PPV indeed mitigates adversarial manipulation of votes from a game-theoretic framework in the next section.

\begin{algorithm}
\KwIn{$x_i^t$, $N_{thld}$, $N_{vo}, \mathcal{I}^{\text{CPV}}$}
\KwOut{$N_{vo}, \mathcal{I}^{\text{CPV}}$}
 \nl   initialize:  n $\leftarrow$ $N_{vo}$, H $\leftarrow$ $\mathcal{I}^{\text{CPV}}$\;
 \nl k $\leftarrow$ getEventType($x_i^t$)\;
 \nl \uIf{checkEventType(k) == False}{
    \Return }
 \nl $\sigma_{v_i}$ $\leftarrow$ getVehSig($x_i^t$)\;
 \nl \uIf{verifyVehSig($\sigma_{v_i}$) == False}{
    \Return }
 \nl n++\;
 \nl H.add($x_i^t$)\;
 \nl $x_{coalition} \leftarrow 0$\;
 \For {temp in H}{
    \nl   $x_{tmp} \leftarrow getVote(temp)$\;
    \nl  \If{$x_{tmp} == 1$}{ 
    \nl    $x_{coalition}$++}
  }
 \nl \uIf{$w_{coalition} >= \frac{2}{3}*n$ and $n \geq N_{thld}$}{
 \nl  RSU confirms and report the event to TMC\;
 \nl  $n$ = 0, H.clear()\;}
 \nl \uElseIf{$w_{coalition} < \frac{2}{3}*n$ and $n \geq N_{thld}$}{
 \nl  no consensus formed on the event\;
 \nl  $n$ = 0, H.clear()\;}
 \nl  $N_{vo}=n$, $\mathcal{I}^{\text{CPV}}$ $\leftarrow$ H\;
 \nl \Return{$n_{vote}, \mathcal{I}^{\text{CPV}}$}

 \caption{The conventional plurality voting (\textbf{CPV}) algorithm~\cite{liu2019byzantine,leonardos2020weighted} for verifying V2I-reported traffic events}
 
\end{algorithm}

\begin{algorithm}
\KwIn{$x_i^t$, $VVMT_{thld}$, $N_{thld}$, $N_{vo}, \mathcal{I}^{\text{PPV}}$}
\KwOut{$N_{vo}, \mathcal{I}^{\text{PPV}}$}
 \nl   initialize:  n $\leftarrow$ $N_{vo}$, H $\leftarrow$ $\mathcal{I}^{\text{PPV}}$\;
 \nl k $\leftarrow$ getEventType($x_i^t$)\;
 \nl \uIf{checkEventType(k) == False}{
    \Return }
 \nl $\sigma_{v_i}$ $\leftarrow$ getVehSig($x_i^t$)\;
 \nl \uIf{verifyVehSig($\sigma_{v_i}$) == False}{
    \Return }
 \nl $PoT_{v_i}$ $\leftarrow$ getTravelProof($x_i^t$)\;
 \nl \uIf{verifyVehProof($PoT_{v_i}$) == False}{
    \Return }
 \nl $VVMT_{v_i}$ $\leftarrow$ getVVMT($x_i^t$)\;
 \nl \uIf{$VVMT_{v_i} < VVMT_{thld}$}{
    \Return }
 \nl n++\;
 \nl H.add($x_i^t$)\;
 \nl $x_{coalition} \leftarrow 0$\;
 \For {temp in H}{
    \nl   $x_{tmp} \leftarrow getVote(temp)$\;
    \nl  \If{$x_{tmp} == 1$}{ 
    \nl    $x_{coalition}$++}
  }
 \nl \uIf{$w_{coalition} >= \frac{2}{3}*n$ and $n \geq N_{thld}$}{
 \nl  RSU confirms and report the event to TMC\;
 \nl $n$ = 0, H.clear()\;}
 \nl \uElseIf{$w_{coalition} < \frac{2}{3}*n$ and $n \geq N_{thld}$}{
 \nl  no consensus formed on the event\;
 \nl  $n$ = 0, H.clear()\;}
 \nl  $N_{vo}=n$, $\mathcal{I}^{\text{PPV}}$ $\leftarrow$ H\;
 \nl \Return{$N_{vo},\mathcal{I}^{\text{PPV}} $}

 \caption{The proposed POT-enhanced plurality voting (\textbf{PPV}) algorithm for verifying V2I-reported traffic events}
\end{algorithm}

\section{Security Analysis}
\subsection{Security of the POT protocol}
Since the communication between vehicles and road infrastructure in the POT protocol is vulnerable to eavesdropping, colluding, and insider attacks, we need to evaluate whether and how POT protocol can defend against different types of attacks on V2I channels.

\subsubsection{Replay attacks} 
Similar to other trajectory-based authentication approaches~\cite{chen2009robust,chang2011footprint}, the POT protocol can prevent the misuse of location signatures intercepted by an adversary when eavesdropping V2I channels. This is because the public key of a vehicle attached to the location signature can ensure that only the vehicle that holds the corresponding private key can claim the "ownership" of the location signature and use it as proof.

\subsubsection{Proof or trajectory forgery by "inside" adversaries}
Incorporating cryptography hashing (e.g., sha256) into the signed data can prevent the forgery of chains of proofs. Since we have added the hash of the previous location signature into the contents of the current location signature during stage 2 of the POT protocol, any changes made to a particular location signature will also change its hash value, which results in an inconsistency between the hash of that particular location signature and the pre-hash value contained in the next location signature in the chain of proofs. This also means that an adversary who holds valid vehicle credentials and wants to forge a valid chain of proofs containing multiple location signatures must be physically present in each corresponding RSU. The adversary must also follow a “plausible” trajectory to gain VVMT as any RSU only accepts a location signature request if the previous signature attached to it is signed by another legitimate RSU (legitimacy checks on location signature in Stage 2), and the trust authority will verify the plausibility of a vehicle's claimed trajectory indicated by its chain of proofs (trajectory plausibility checks in Stage-3). 

It is the cost of "compulsory" spatial movement for gaining location signatures that reduces the adversary’s incentive for being malicious. However, the requirement for spatial movement will not incur extra costs for normal travelers.

\subsubsection{Private key swapping by colluding nodes}
Although POT does not completely eliminate colluding nodes, it can diminish the negative effect of misusing location signatures for the same reasons discussed in proof-forgery attacks. For example, even if a malicious vehicle node may tunnel the location signature it collects along with its private key to another colluding node, the former one must also share a valid chain of proofs it acquired for the latter node to gain VVMT. Sharing only a subset of location signatures or modifying any location signatures will invalidate the whole chain of proofs. This also means that, from an economic perspective, the group of colluding nodes as a whole must always "pay the cost" incurred by spatial movement. When the cost becomes greater than the benefit the colluding nodes earn, they lose the incentive for forging V2I events. In the next section, we will formally prove this from a game-theoretic perspective.

\subsection{Security of POT-enhanced plurality voting algorithm}
Plurality voting game has been studied extensively in the game theory literature \cite{riker1968theory,palfrey1985voter,dhillon2004plurality, feddersen2006theory, myatt2007theory}. However, the plurality voting for verifying V2I-validation in this paper is different from the canonical voting game setting. First, in this paper, vehicles that do not vote will not get any reward, while in the canonical voting game, a participant, though not voting, can still get a reward if their preferred candidate is elected. Second, in this paper, we assume vehicles whose votes are in the minority group will get punishment due to their potential cheating behavior. Due to these different settings, we cannot directly quote results from the canonical plurality voting game.

In this paper, we propose a new framework to analyze the two voting schemes (conventional plurality voting (CPV) v.s. POT-enhanced plurality voting (PPV)). Results show that, \textbf{under the CPV, there is no specific way to eliminate the Nash Equilibrium (NE) that all vehicles cheat. However, under PPV, we prove that it is possible to eliminate this type of Nash Equilibrium (i.e., all vehicles cheat) by reasonably setting $VVMT_{thld}$ and $N_{thld}$.} The specific analysis is shown below.  

We make the following assumption about the driver's utility. The utility (payoff) matrix of a vehicle $i$ (i.e., $u_i(x_i,x_{-i})$) is shown in Table \ref{tab_payoff}.
\begin{table}[htb]
\centering
\caption{Payoff matrix of vehicle $i$ (i.e., $u_i(x_i,x_{-i})$)}\label{tab_payoff}
\resizebox{\linewidth}{!}
{
\begin{tabular}{@{}cccc@{}}
\toprule
Scenario                              & \multicolumn{1}{c}{$x_i = 1$} & \multicolumn{1}{c}{$x_i = -1$} &   \multicolumn{1}{c}{$x_i = 0$}\\ \midrule
$N_{vo} \geq N_{thld}$ and Majority$^1$ vote $x_i=1$  & $R-c$                                 & $-P-c-Vi$       &           0                 \\
$N_{vo} \geq N_{thld}$ and Majority vote $x_i=-1$ & $-P-c$                                  & $R-c-Vi$          &        0               \\
$N_{vo} \geq N_{thld}$ and No majority consensus$^2$ & $-c$                                  & $-c-Vi$          &        0               \\
$N_{vo} < N_{thld}$    & $-c$                                    & $-c-Vi$         &       0          
                                    \\ \bottomrule
\multicolumn{4}{l}{\begin{tabular}[c]{@{}l@{}}$^1$: Majority means more than $\frac{2}{3}$ of all voted vehicles. \\
$^2$:  No majority consensus means no voting group has more than $\frac{2}{3}$ voters. \\
\end{tabular}}
\end{tabular}
}
\end{table}
$R$ is the reward given to voted vehicles whose votes are in the majority group (i.e., more than $\frac{2}{3}$). $P$ is the punishment for cheating (i.e., voting in the minority group). $c$ is a fixed cost of voting, which accounts for the time and inconvenience of submitting responses to the system. $x_i = 1, -1, 0$ represents the vehicle $i$'s reporting truth, cheating, and not voting actions, respectively. $N_{vo}$ is the total number of voted vehicles and $N_{thld}$ is the threshold of number of voted vehicles to result in a valid outcome. We assume each vehicle $i$ has an integrity cost $V_i$ if they choose to cheat (i.e., $x_i = -1$). Note that $V_i < 0$ indicates malicious drivers who get positive payoff if cheating. We assume that $R>c$, otherwise no one will choose to vote. 

We consider a complete knowledge voting game where $V_i$ ($\forall i\in \mathcal{I}^{\text{CPV}}\; \text{or}\; \mathcal{I}^{\text{PPV}}$) is a common knowledge and known by everyone\footnote{It is also possible to consider a incomplete knowledge Bayesian game with $V_i$ as random variables. We leave this as a future study}. We define a driver $i$ as a ``super-integrity driver'' (SID) if $V_i > R-c$, which means for these drivers, their integrity cost is so high that they are not willing to win the vote by cheating. Define the number of SIDs in a group of driver $\mathcal{I}$ as
\begin{align}
    N_{SID}(\mathcal{I}) = \sum_{i\in\mathcal{I}} \mathbbm{1}[V_i > R-c]
\end{align}

% The assumption for the property of $V_i$ is elaborated in Assumption \ref{assump_int_cost}.
\begin{assumption}\label{assump_int_cost}
In general, drivers with high $VVMT_i$ are more likely to be a SID. Mathematically, the increase in $VVMT_{thld}$ will increase the proportion of SIDs in $\mathcal{I}^{\text{PPV}}(VVMT_{thld})$ (denoted as $p_{SID}$), i.e., 
\begin{align}
    VVMT_{thld} \uparrow  \;	\Rightarrow \; p_{SID} := \frac{N_{SID}(\mathcal{I}^{\text{PPV}}(VVMT_{thld}))}{|\mathcal{I}^{\text{PPV}}(VVMT_{thld})|}\uparrow 
\end{align}
\end{assumption}

% Parameter assumption: $R>c$. $V_i$ uniformly distributed random variable with support $[0,V]$.  
% $V_i < 0$.  malicious vehicles which get positive payoff if not voting for truth
Based on Assumption \ref{assump_int_cost} (we justify this assumption in \textbf{Proposition}~\ref{prop_POT_secure} included in the appendix), we have the following main Proposition for the analysis of NE.
\begin{prop}\label{prop_NE}
In a plurality voting game (both CPV and PPV) with eligible voter set $\mathcal{I}$, all voting vehicles choosing to cheat ($x_i = -1$ for all $i$ choosing to vote) is \textbf{NOT} a pure NE if and only if  $p_{SID} > \frac{|\mathcal{I}| - N_{thld}}{|\mathcal{I}|}$. 
\end{prop}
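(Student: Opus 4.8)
The plan is to recast the threshold inequality in a combinatorial form and then reduce the claim to a single deviation check. Observe that $p_{SID} > \frac{|\mathcal{I}| - N_{thld}}{|\mathcal{I}|}$ is equivalent to $N_{SID} > |\mathcal{I}| - N_{thld}$, i.e. to $|\mathcal{I}| - N_{SID} < N_{thld}$: the number of \emph{non}-SID voters is strictly below the committee size. So I would prove the logically equivalent statement that the all-cheat profile \emph{is} a pure NE if and only if there are at least $N_{thld}$ non-SIDs, and then take contrapositives. The natural candidate for an all-cheat equilibrium is the profile in which every non-SID votes and cheats ($x_i=-1$) while every SID abstains ($x_i=0$); this is forced by the definition of a SID, since a SID who cheats inside a winning coalition collects $R-c-V_i<0$ and therefore strictly prefers the payoff $0$ of abstention.

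For the ``if'' direction I would assume $|\mathcal{I}|-N_{SID}\ge N_{thld}$ and verify that the candidate profile admits no profitable unilateral deviation, reading every payoff off Table~\ref{tab_payoff}. Since the $|\mathcal{I}|-N_{SID}\ge N_{thld}$ cheating voters are unanimous, the cheat side is the $\tfrac23$-majority, consensus forms with outcome $-1$, and each cheating non-SID earns $R-c-V_i\ge 0$. A cheating non-SID who switches to truth becomes the minority and earns $-P-c$, and one who abstains earns $0$; neither beats $R-c-V_i$. An abstaining SID who switches to cheat earns $R-c-V_i<0$ and one who switches to truth earns $-P-c<0$, both worse than the current $0$. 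Hence the profile is a pure NE.

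For the ``only if'' direction I would start from an arbitrary all-cheat profile that is a NE, with cheating set $S$, and first argue it reaches consensus ($|S|\ge N_{thld}$). Given consensus with majority $-1$, each voter $i\in S$ earns $R-c-V_i$, and NE forbids the deviation to abstention (payoff $0$), forcing $R-c-V_i\ge 0$, i.e. $V_i\le R-c$. Thus every cheating voter is a non-SID, so $N_{thld}\le|S|\le|\mathcal{I}|-N_{SID}$, which is exactly the required bound. This reasoning is indifferent to whether $\mathcal{I}=\mathcal{I}^{\text{CPV}}$ or $\mathcal{I}=\mathcal{I}^{\text{PPV}}(VVMT_{thld})$, which is why the single proposition covers both schemes.

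The step I expect to be the real obstacle is the sub-threshold case $|S|<N_{thld}$ in the ``only if'' direction. Here no consensus forms, a cheating voter's payoff is $-c-V_i$, and stability turns on its sign: a ``strongly malicious'' driver with $V_i<-c$ profits from cheating even without winning, so a small cheating bloc can in principle be an equilibrium that the non-SID count does not see. I would close this gap either by restricting attention, as the security model does, to all-cheat equilibria that actually confirm a false event (the only ones that compromise the RSU, which forces $|S|\ge N_{thld}$), or by adopting the mild assumption $V_i\ge -c$ (the voting cost dominates any intrinsic payoff from cheating), under which every would-be cheater in a non-consensus coalition strictly prefers to abstain so that no sub-threshold all-cheat profile survives. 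Boundary cases---an exact $\tfrac23$ tie and the knife-edge $V_i=R-c$---I would dispatch using the strict inequality in the SID definition.
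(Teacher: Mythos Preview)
Your approach mirrors the paper's exactly: both rewrite the threshold as $|\mathcal{I}|-N_{SID}<N_{thld}$, exclude SIDs from any cheating coalition via dominance of abstention over cheating, and exhibit the non-SIDs-cheat/SIDs-abstain profile for the converse direction. The sub-threshold obstacle you flag is genuine and is in fact glossed over in the paper's own proof, which simply asserts that any cheater in a non-consensus coalition ``can switch to $x_i=0$ to be better-off'' without checking the sign of $-c-V_i$; your proposed patch (restricting to equilibria that actually produce outcome $O=-1$, or assuming $V_i\ge -c$) is precisely what the paper's argument also implicitly needs.
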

\begin{proof}
\textbf{Sufficient condition}: if $p_{SID} > \frac{|\mathcal{I}| - N_{thld}}{|\mathcal{I}|}$, all voting vehicles choosing to cheat is not a pure NE.

Notice that for all SIDs, cheating (i.e., $x_i = -1$) is a weakly dominated strategy (by $x_i = 0$). Hence, we can eliminate $x_i = -1$ for all SIDs. As we assume the proportion of SIDs is greater than $\frac{|\mathcal{I}| - N_{thld}}{|\mathcal{I}|}$. The number of remaining drivers who may cheat is at most $N_{thld} - 1$. According to the payoff matrix (Table \ref{tab_payoff}), even if all these remaining drivers choose to cheat, the number of voters are less than $N_{thld}$, which will not result in an outcome of $O = -1$. Thus, their payoff will be $-c-V_i$. And anyone can switch to $x_i = 0$ to be better-off (with payoff equal to 0). Therefore, all voting vehicles choosing to cheat is not a pure NE. Under this scenario, the only pure NEs are 1) all eligible vehicles choose to report truth (i.e., $x_i = 1,\;\forall i \in \mathcal{I}$) with payoff $R-c$. 2) all eligible vehicles choose to not vote (with payoff 0)\footnote{We assume $N_{thld} > 1$, so anyone switch to vote will not change the outcome}.

\textbf{Necessary condition}: if $p_{SID} \leq \frac{|\mathcal{I}| - N_{thld}}{|\mathcal{I}|}$, all voting vehicles choosing to cheat is a pure NE.

The proof is similar, since now the number of non-SID drivers is at least $N_{thld}$. If all of them choose to cheat, everyone would has a positive payoff of $R-c-V_i$. And no one has incentive to switch to $x_i = 1$ (payoff $-P-c$) or $x_i=0$ (payoff 0). Under this scenario, there are three possible pure NE. The first two are the same as above. And the third pure NE is that all SIDs choose not to vote while all non-SIDs choose to cheat.  
\end{proof} 

In the CPV, there is no way to change $p_{SID}$ from the operator side. Hence, we cannot eliminate the NE that results in a cheating outcome. However, in the PPV, since we assume $p_{SID}$ increases with the increase in $VVMT_{thld}$, we can change $VVMT_{thld}$ to increase the proportion of SIDs in the eligible voters and eliminate the cheating-outcome NE, which proves the advantage of PPV. 

However, in PPV, it is also essential to balance to number of eligible voters and $VVMT_{thld}$. If $VVMT_{thld}$ is too high, there will be only a small number of eligible voters, then Assumption \ref{assump_int_cost} may not hold given large variance in a small population (i.e., it is possible that some attackers are included into this small group of eligible voters).

\section{Experiments and evaluation}\label{sec:exp}
To evaluate the performance of the proposed POT protocol and the extended V2I voting algorithms, we implement the POT protocol and the voting algorithm in the V2X Simulation Runtime Infrastructure (VSimRTI)~\cite{protzmann2017simulation} and conduct vehicular simulations in a two-lane highway (Interstate highway I-80 within Wyoming).

The simulation is conducted in a desktop with Intel(R) Core(TM) i7-6850K processor at 3.6GHz, 32GB RAM memory, and (X64)Windows 10 operating system. The parameter settings for the simulation are given in Table~\ref{tab3}.

We are interested in the security and performance of the POT-enhanced voting algorithms (PPV Algorithm 2) by using Plurality-voting (CPV Algorithm 1) as the benchmark. For the security metric, the proportion of invalid events is used to evaluate to what extent each type of voting algorithm is vulnerable to malicious reports. For the performance metric, the throughput by the RSU running voting algorithms with respect to the number of correct events per minute is used for the evaluation.

Additionally, we also consider the time it takes for the RSU to run the voting algorithms to confirm an event in each round (i.e., latency), which can be influenced by the minimum number of votes required and the local traffic density, as suggested by the simulation results.

When evaluating the PPV algorithm, we take into account the influence of the proportion of SIDs ($p_{SID}$) on algorithm security and performance by varying the threshold of voting $VVMT_{thld}$. Therefore, all simulation results include scenarios when CPV is adopted and PPV with $p_{SID}$ being equal to $30\%$, $50\%$, and $70\%$. 

All the experiments are conducted under different traffic settings where the traffic density is in the range between 35 and 45 vehicles per mile, which corresponds to the normal flow conditions in which traffic demands are approaching or equal to roadway capacities in U.S. roadways~\cite{manual2010hcm2010}. 

\begin{table}[t]
	\caption{PARAMETERS SETTINGS IN THE SIMULATION}
		\begin{tabularx}{\columnwidth}{c|X}
			\hline
			\textbf{Parameters}&\textbf{Settings} \\
			\hline
			\hline
			Simulation time& 1000 seconds \\ 
			\hline
			Max number of vehicles& 2000\\ 
			\hline
			Speed of vehicles& 60 mph (max)\\ 
			\hline
			Traffic flow& 600,1200,2500,3600\\ 
			\hline
			Traffic simulator& SUMO 1.2.0 \\ 
			\hline
			Network simulator& Simple Network Simulator (SNS) \\
			\hline
			Communication range of OBU/RSU& 500 meters\\ 
			\hline
			Hash algorithm used by RSU& SHA-256 \\
			\hline
		 \end{tabularx}
		\label{tab3}
\end{table}

\begin{figure*}[tb!]
\centering
\begin{subfigure}{0.46\textwidth}
  \centering
  \includegraphics[width=\textwidth]{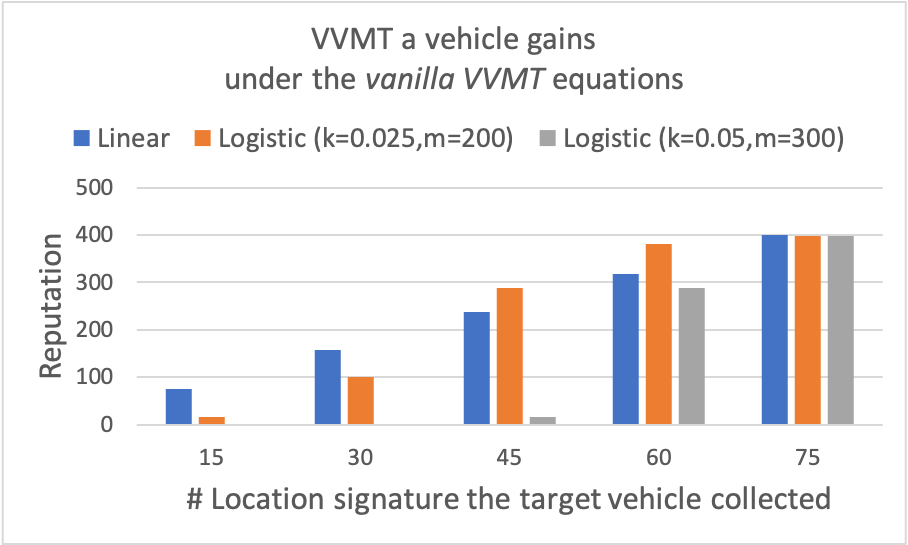}
  \caption{\textit{Vanilla VVMT}}
  \label{fig:vanilla_vvmt}
\end{subfigure}
\begin{subfigure}{0.46\textwidth}
  \centering
  \includegraphics[width=\textwidth]{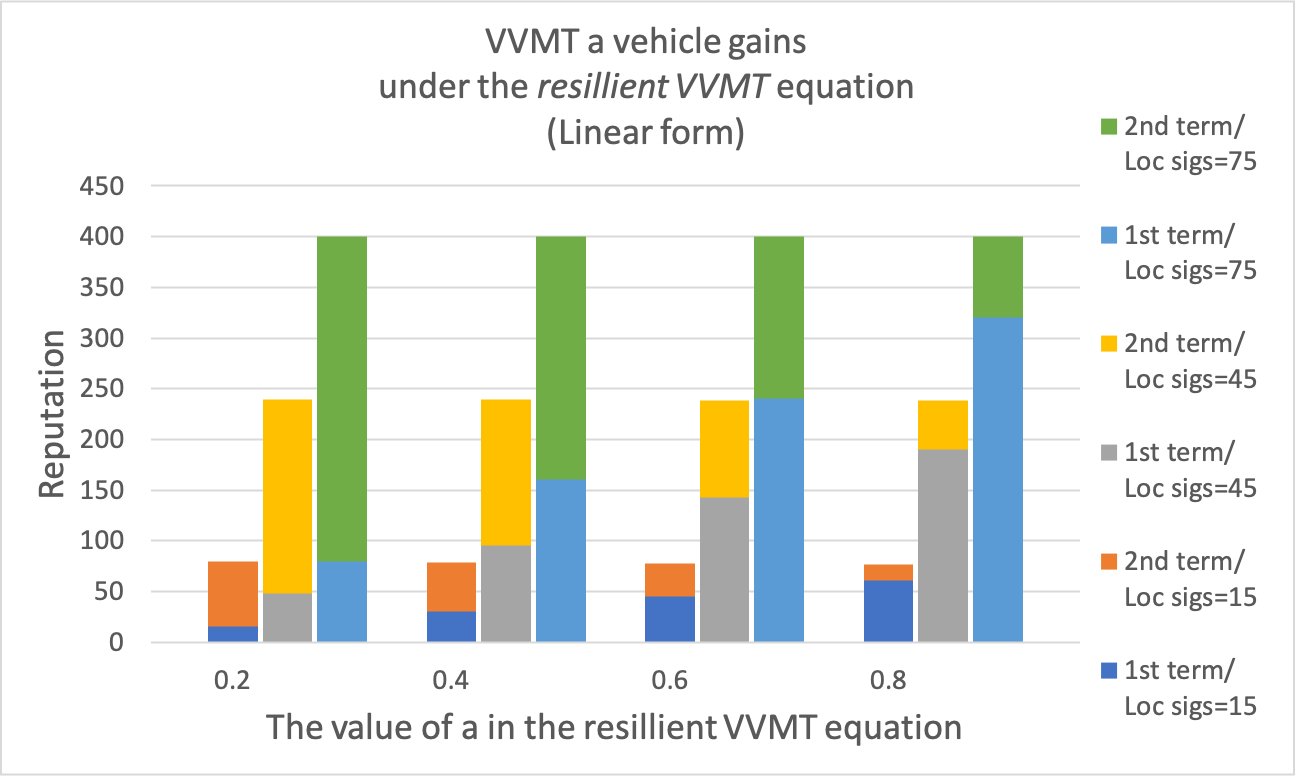}
  \caption{\textit{Resilient VMT} in linear form}
  \label{fig:resillient_vvmt_linear}
\end{subfigure}
\begin{subfigure}{0.46\textwidth}
  \centering
  \includegraphics[width=\textwidth]{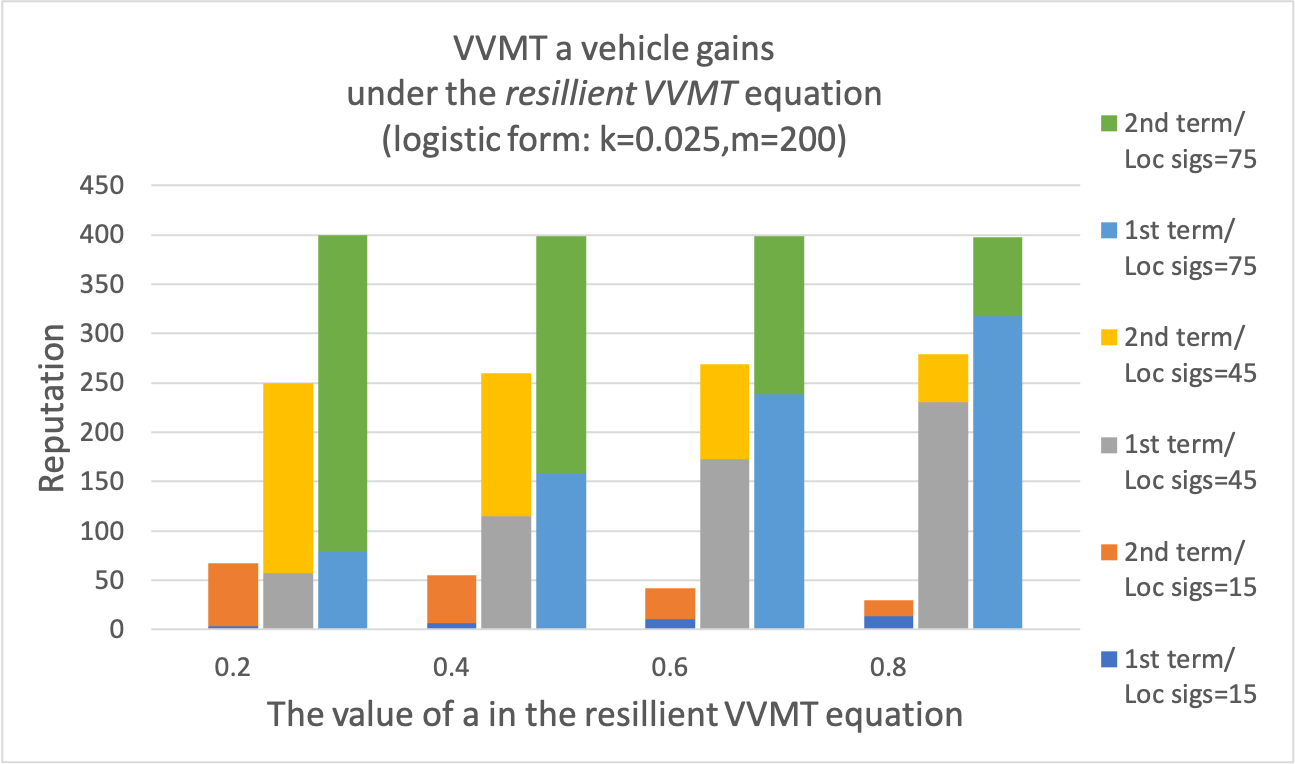}
  \caption{\textit{Resilient VVMT} in logistic form:k=0.025,m=200}
  \label{fig:resillient_vvmt_logk025m200}
\end{subfigure}
\begin{subfigure}{0.46\textwidth}
  \centering
  \includegraphics[width=\textwidth]{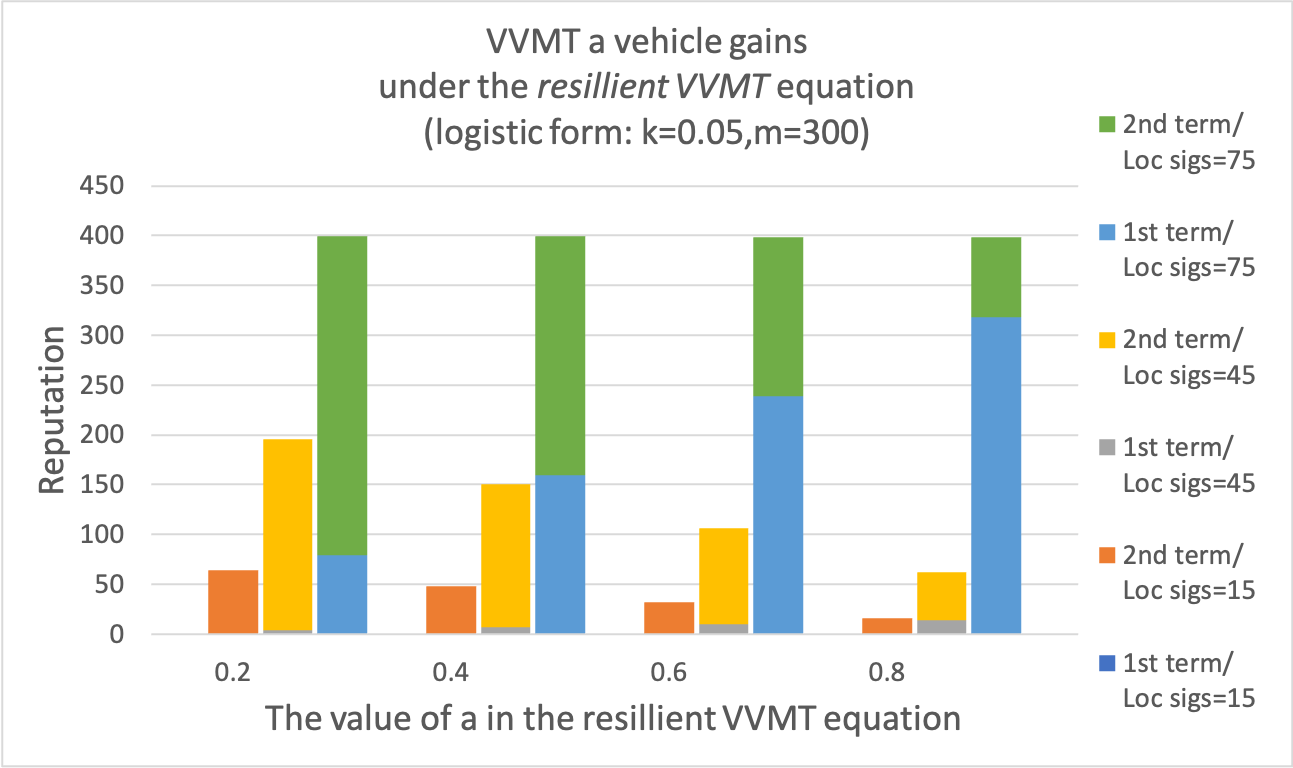}
  \caption{\textit{Resilient VVMT} in logistic form:k=0.05,m=300}
  \label{fig:resillient_vvmt_logk05m300}
\end{subfigure}
\caption{Vehicle reputation under different VVMT equations.}
\label{fig:vvmt_reputation}
\end{figure*}

\subsection{The influence of VVMT designs on vehicle reputation}
A key design aspect of the proposed POT protocol is the influence of different implementations of VVMT, namely eq.~\ref{eq:vanilla_VVMT} and \ref{eq:flexible_vvmt} on vehicle reputation. We present a case study on the proof-of-travel collection in a real-world interstate corridor in Southern Wyoming, USA. According to the U.S. Department of Transportation (DOT), interstate 80 (I-80) was deployed with 75 RSUs along a 400-mile road segment~\cite{pilotconnected}. Connected vehicles that are installed with onboard V2X units can share with RSUs information regarding road and weather conditions captured by on-vehicle sensors.

When the \textit{vanilla VVMT} (eq.~\ref{eq:vanilla_VVMT}) is adopted, a vehicle gains a reputation that corresponds to the physical distance, as shown in Fig.~\ref{fig:vanilla_vvmt}. Rather than having the vehicle accumulate its reputation with linear growth as it travels (orange bar in Fig.~\ref{fig:vanilla_vvmt}), we may modify parameters ($k$ and $m$) in the logistic function form in eq.~\ref{eq:vanilla_VVMTb} to postpone the time the vehicle starts to gain reputation or the difficulty, as shown in the orange and grey bars in Fig.~\ref{fig:vanilla_vvmt}. In this case, drivers who join the V2X ecosystems later may find it more difficult to gain reputation values as rewards, which encourages the early adoption of V2X technologies. As mentioned earlier, this scenario is inspired by the incentive designs in existing Blockchain-based cryptocurrencies. For example, similar to the dynamic mining difficulty, a mechanism that constantly changes parameters in Bitcoin or Ethereum to control the average time to mine a block, we can use different functions or parameters in the \textit{vanilla VVMT} equation to control how difficult or fast each vehicle can gain reputation, as shown in different curves for reputation growth in Fig.~\ref{fig:vanilla_vvmt}.

While the \textit{vanilla VVMT} requires a vehicle to collect every location signature from all the RSUs it meets when traveling, the \textit{resilient VVMT} (eq. 3) provides fault tolerance by allowing the vehicle to miss some RSUs along the path of its movement. When eq.~\ref{eq:flexible_vvmt} is adopted, both the distance (the 1st term) and the ratio of the actual number of location signatures to the total number of RSUs (the 2nd term) will contribute to the final scores of vehicle reputation, as shown in Fig.~\ref{fig:resillient_vvmt_linear},~\ref{fig:resillient_vvmt_logk025m200}, and~\ref{fig:resillient_vvmt_logk05m300}. The horizontal axis in these three figures represents the weights (i.e., $\alpha$ in eq.~\ref{eq:flexible_vvmt}) of the first and the second term.

The trend of reputation growth shown in Fig. 5b is similar to Fig. 5a (blue columns) in which vehicle reputation scores increase steadily regardless of the contributions made by the first and the second terms in eq. 3a. The result implies a lack of flexibility in using the linear versions to implement both the vanilla VVMT (eq. 2a) and the resilient VVMT (eq. 3a) equations.

On the other hand, the logistic versions of both the vanilla and resilient VVMT equations are much more flexible and can support the real-time adjustment of the difficulty level for vehicles to gain a reputation. For the logistic versions (eq. 2b and eq. 3b), we can increase the values of parameters k and m to make it more difficult to increase VVMT. For example, when k=0.025 and m=200, a vehicle that has collected 45 locations signatures along its path of movement (assuming there are 75 roadside units in total) may earn 250 reputation scores or more, as shown in Fig. 5c (the middle columns with yellow and grey colors). On the other hand, when k=0.05 and m=300, the reputation score earned by a vehicle with the same number of location signatures (i.e., 45) has only an average value of 100, while reaching the minimum score of 50, as shown in Fig. 5d.

The results presented earlier are aligned with our design rationales regarding the role of k and m in the proposed Proof-of-Travel protocol. Similar to the existing Blockchain-based cryptocurrencies in which a system-defined target value (e.g., 256-bit number) governs the difficulty of successful “mining”, in the Proof of Travel protocol, we would also like to have an adaptive mechanism to adjust how difficult vehicles earn reputation at different stages of V2X deployment.

%A small value for $\alpha$ (e.g., 0.2) means that we want to punish vehicles that fail to collect location signatures. In this case, a vehicle that has collected 15 location signatures will earn much less reputation than another vehicle that has collected all of the 75 location signatures during the 400-mile road segment. The difference between the first bar and the third bar in the first column in Fig.~\ref{fig:resillient_vvmt} is about 250. On the other hand, when $\alpha$ is assigned a large value (e.g., 0.8), the difference in the final reputation score is less than 100, as shown in the last column in Fig.~\ref{fig:resillient_vvmt}.

\subsection{The security of voting algorithms}
The security metric of the proportion of invalid events is derived by comparing the number of times when the RSU confirms false events or fails to report a true event that actually occurred (because no consensus is formed among vehicles based on a given voting rule) against the number of times when the RSU confirms and disseminates correct events. We refer to the former two types of events as "invalid events" because each of them represents an undesired situation we are trying to avoid: while confirming and disseminating a false traffic event means that the traffic management center will mistakenly allocate valuable resources, such as first responders or emergency medical services, failing to confirm a true event can result in the loss of assets or life in the event of man-made or natural disaster. 

The proportion of invalid events derived can then be used to evaluate the extent to which CPV or PPV is vulnerable to malicious reports, as shown in Fig~\ref{fig:voting_security}. The horizontal axis represents the proportion of the malicious when the vehicular network starts to operate. We compare the security performance between the CPV and PPV by evaluating the proportion that the RSU failed to confirm correct events. Also, we adjust the number of votes and voting threshold in PPV to get a better understanding of PPV's performance under different parameter settings. The results suggest that PPV is less vulnerable to malicious votes than CPV even with the existence of a higher percentage of malicious vehicles when the network starts to operate. The proportion of invalid events by CPV (denoted by blue shaded curves in Fig~\ref{fig:voting_security}) increases to more than 0.8 as the percentage of malicious vehicles increases from 10$\%$ to 50$\%$, as shown in Fig~\ref{fig:voting_security_min5},~\ref{fig:voting_security_min7},~\ref{fig:voting_security_min10}, and~\ref{fig:voting_security_min12}. However, as long as we can adjust the voting threshold $VVMT_{thld}$ in the PPV algorithm for increasing the proportion of SIDs to between (based on Assumption VI.1 and Proposition 2), we can reduce the chance of RSU's failures in confirming correct events to the range between $30\%$ and $60\%$ with a small number of votes required $N_{thld}$.

\begin{figure*}[tb!]
\centering
\begin{subfigure}{0.48\textwidth}
  \centering
  \includegraphics[width=\textwidth]{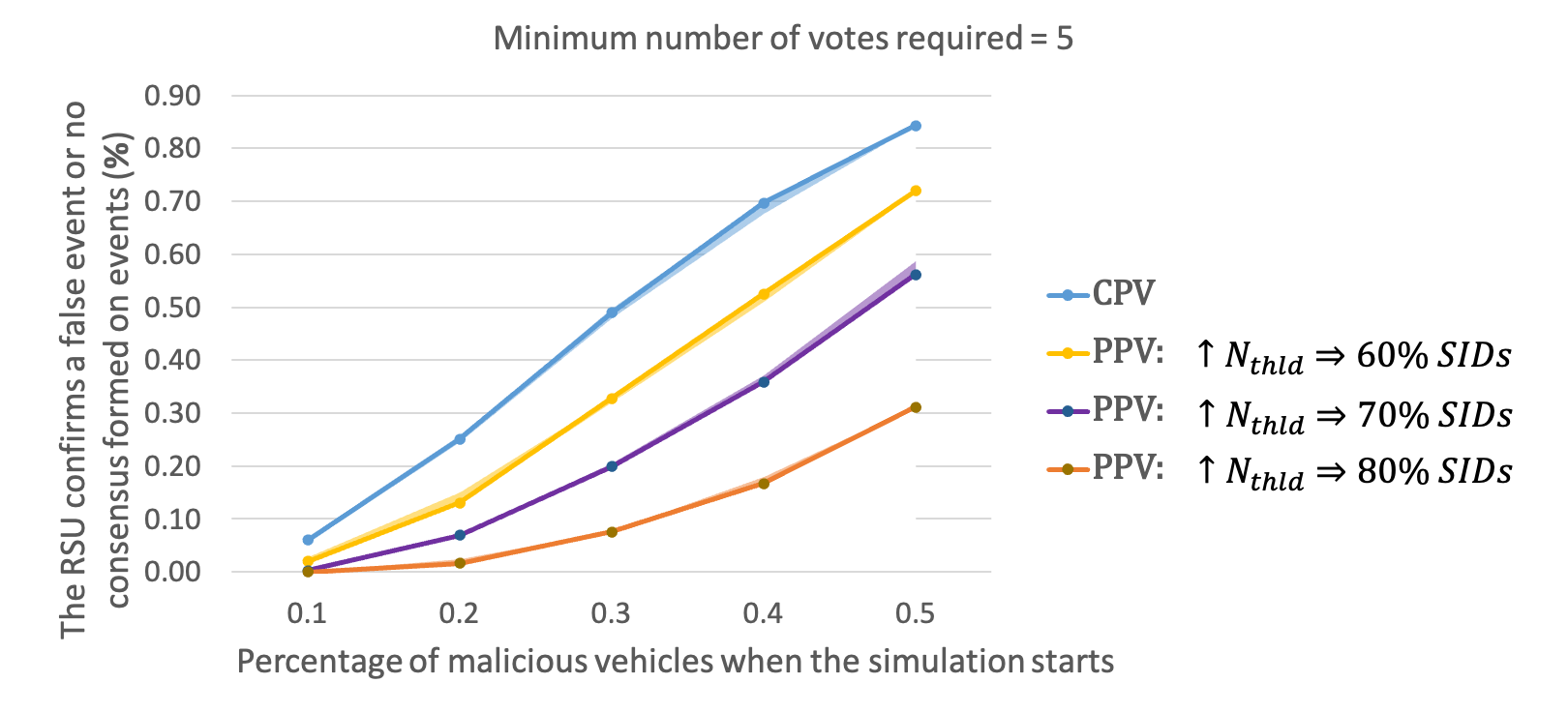}
  \caption{The required minimum number of votes $N_{thld} = 5$}
  \label{fig:voting_security_min5}
\end{subfigure}
\begin{subfigure}{0.48\textwidth}
  \centering
  \includegraphics[width=\textwidth]{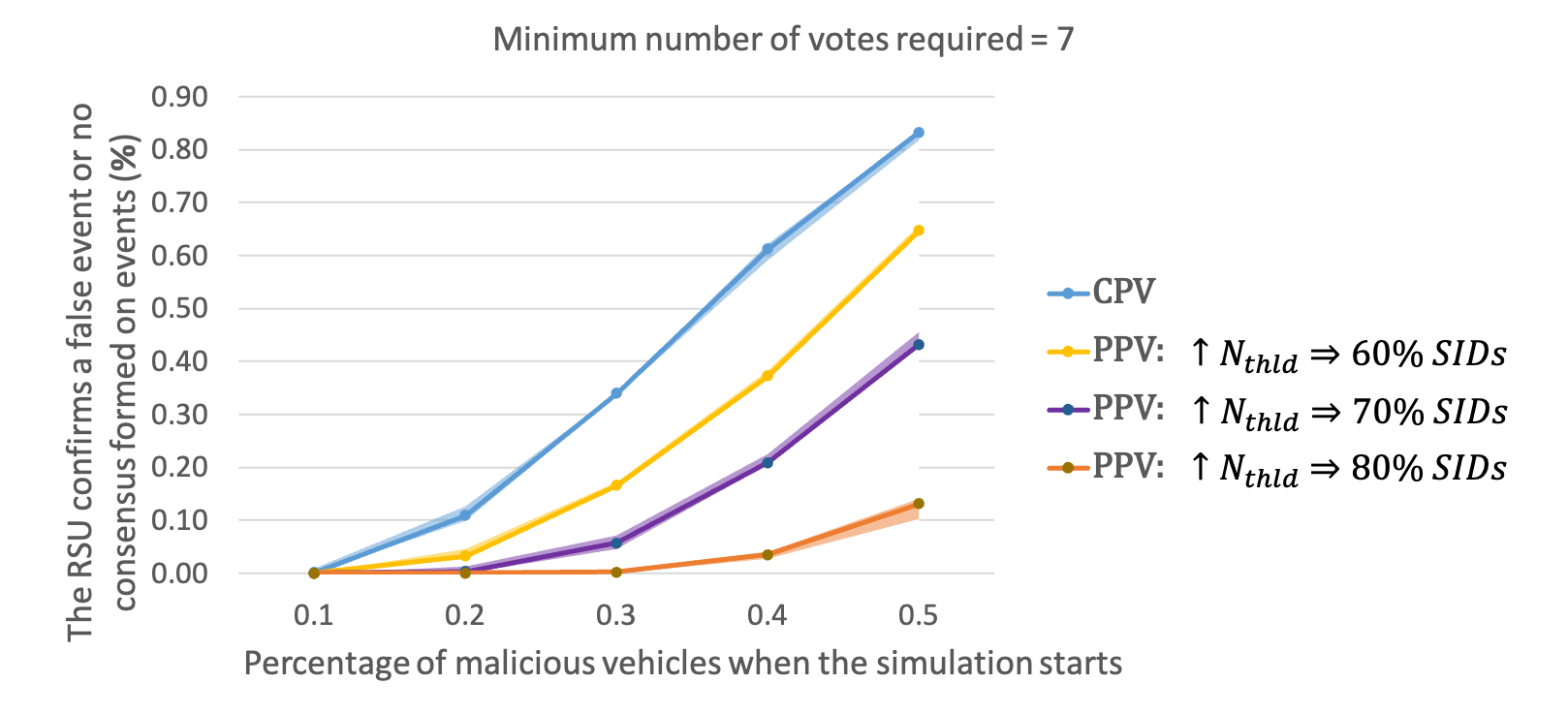}
  \caption{The required minimum number of votes $N_{thld} = 7$}
  \label{fig:voting_security_min7}
\end{subfigure}
\begin{subfigure}{0.48\textwidth}
  \centering
  \includegraphics[width=\textwidth]{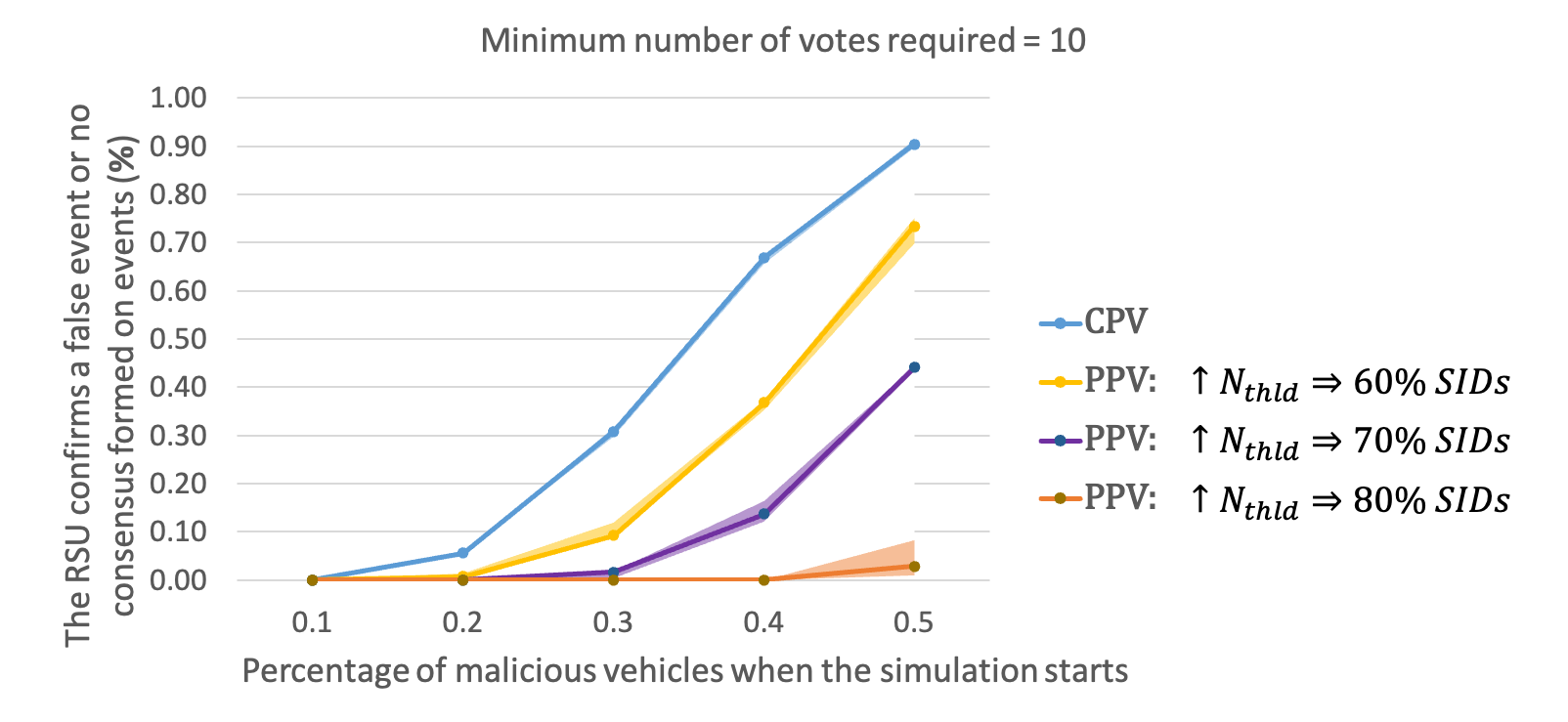}
  \caption{The required minimum number of votes $N_{thld} = 10$}
  \label{fig:voting_security_min10}
\end{subfigure}
\begin{subfigure}{0.48\textwidth}
  \centering
  \includegraphics[width=\textwidth]{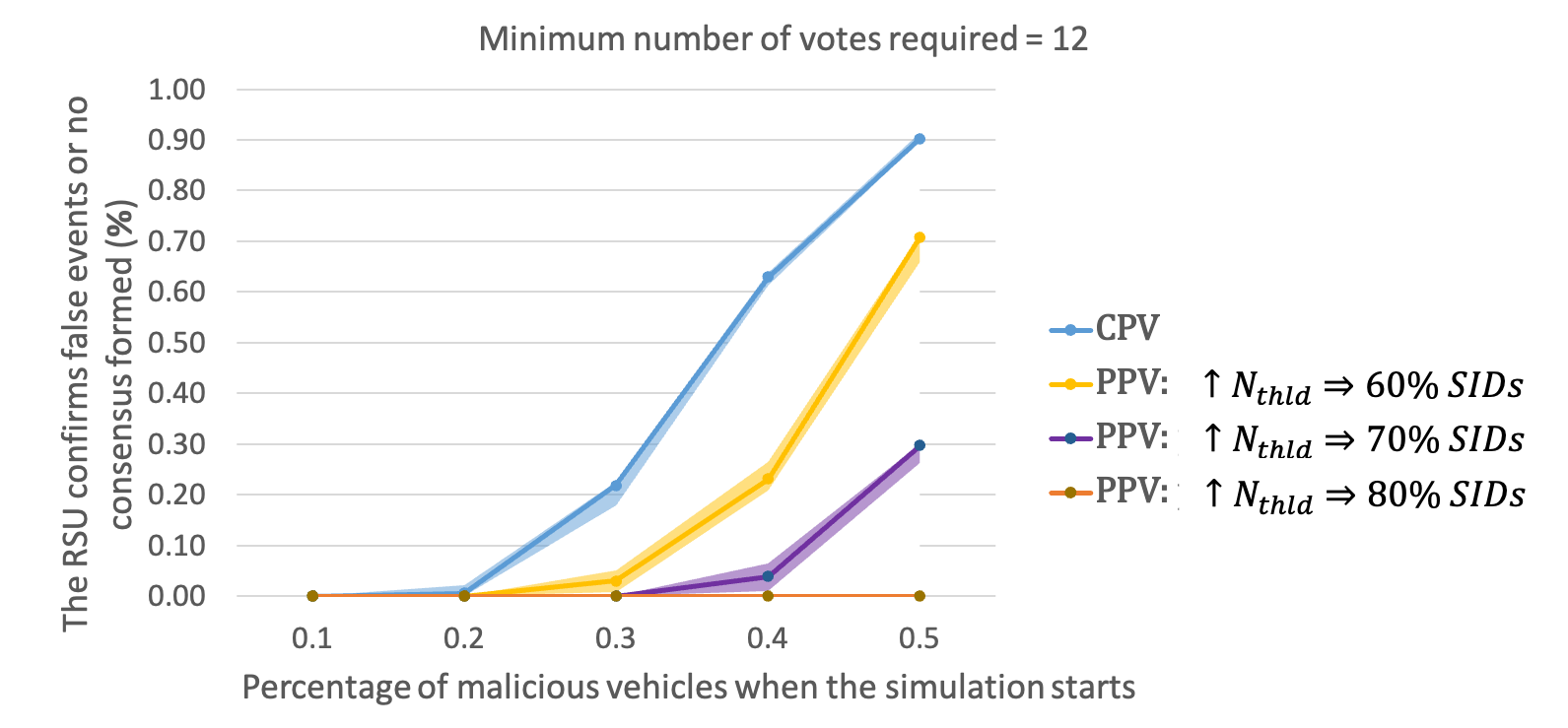}
  \caption{The required minimum number of votes $N_{thld} = 12$}
  \label{fig:voting_security_min12}
\end{subfigure}
\caption{Security vulnerabilities to malicious vehicles: CPV vs. PPV.}
\label{fig:voting_security}
\end{figure*}

\subsection{The tradeoff between event criticality and voting delay}
For certain V2I-enabled crowdsensing applications that are safety-critical, it is desirable for the proposed voting algorithms to achieve a higher percentage of confirming correct events. This can be realized by increasing the number of votes required $N_{thld}$ or the voting threshold $VVMT_{thld}$ in PPV. However, these efforts will result in a delay in event confirmation time based on our experiments. Therefore, it is in the interest of security engineers to evaluate this tradeoff situation between security with respect to ensuring correct decisions on events and timing regarding making decisions on whether to respond to a vehicle-reported traffic event in time.

We present a tradeoff graph for choosing parameter values to balance the two conflicting needs, as shown in Fig.~\ref{fig:voting_latency}. When the traffic density is small (e.g., $<$20 vehicles per mile), it is unwise to set a large number of required votes $N_{thld}$ (e.g. $>$10), which will result in an extremely high delay (red color cells in Fig.~\ref{fig:voting_latency}) in confirming an event. For example, setting $N_{thld}$ to 12 in PPV can lead to a confirmation delay of more than one minute. While the delay time is tolerable for the weather and road services discussed early, it might be unacceptable when V2I-enabled crowdsensing is used during the emergency evacuation. The same design rationale can be applied to the local sensing and dynamic map services. For building HD maps, we may set a high value for $N_{thld}$ to ensure the correctness of unsynchronized sensing data shared by different vehicles.

\begin{figure}[tb!]
\centering
    \includegraphics[width=0.45\textwidth]{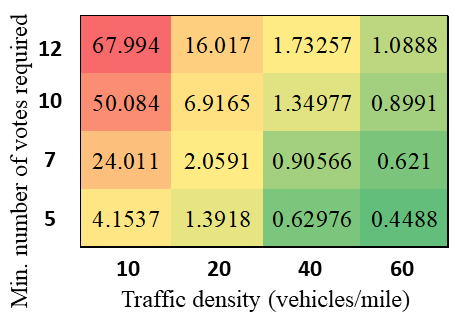}
\caption{Event confirmation time (seconds).}
\label{fig:voting_latency}
\end{figure}

The same tradeoff exists between the parameter of the voting threshold $VVMT_{thld}$ and the security of the voting game. In most of the simulation runs of our experiments, we observe around 20$\%$ reduction in event-confirmation time by lowering the voting threshold $VVMT_{thld}$ to make 10$\%$ extra vehicles eligible to vote. However, this also means that the PPV algorithm will have less chance of confirming correct events, as the results shown in the previous section. We recommend that each tradeoff decision in choosing parameters in PPV be driven by the criticality (i.e., how much tolerance we have regarding event correctness) and timing constraints (how quickly the event needs to be confirmed and routed) for the specific crowdsensing application. 

%\subsection{Adapting PPV to various V2I crowdsensing applications}
%\paragraph{The format of each voting game}
%\paragraph{The nature of the game for each type of voting}

\section{Design considerations for the real-world deployment of the POT protocol}
\subsection{Security credential management}
To adopt and implement the POT protocol in the real world, we need to consider the design opportunities posed by credential management systems for V2X communication and the constraints by the distribution of infrastructure components (e.g., RSUs) in the target region and country for V2X deployment.  

The security credential management system (SCMS) developed for connected vehicles in the U.S.~\cite{CVCredential,brecht2018security}, for example, provides opportunities for POT applications. Since one of the goal of the SCMS is to identify and revoke certificates assigned to compromised vehicles that broadcast malicious events, we can explore the role of the POT protocol in this revocation process. Specifically, it is proposed in~\cite{CVCredential} that the SCMS uses malicious reports sent by onboard devices installed on vehicles to determine whether to revoke the certificate assigned to a particular vehicle. In this case, a vehicle's reputation attested by its proof of travel can serve as the root of trust in its malicious reports in the early stage of SCMS deployment. Additionally, the idea of providing reputational rewards to vehicles that share traffic event data can be extended and applied to the SCMS for incentivizing each connected vehicle to share malicious behaviors they detected to infrastructure. 

Once the reputational points a vehicle owns are recognized and stored on SCMS infrastructure, the records must be protected from malicious tampering or unauthorized access. In this regard, blockchain technology is an ideal candidate for protecting these chains of “travel proof” or reputation scores, as discussed in~\cite{liu2018blockchain,kang2019toward,yang2018blockchain,liu2019blockchain}.

Among design constraints for implementing the proposed POT protocol in the real world, the most challenging one faced by engineers might be to decide the maximum length of the chain of proofs a vehicle should store on board. For the convenience of identity verification, a vehicle might be required by infrastructure to keep a minimum number of location signatures on board and attach them to V2I messages for identity verification. For example, proofs of locations have been used to detect Sybil nodes in vehicular networks~\cite{park2013defense}. Previous work shows that each vehicle must be required to acquire at least 15 location signatures from RSUs along its path of movement if we want to achieve 98$\%$ accuracy in detecting Sybil nodes~\cite{chen2009robust}. However, this security guarantee comes at a price: maintaining a long chain of location signatures will increase the memory consumption by each vehicle and incur extra communication overhead. 

\subsection{Privacy}
While deploying the POT protocol, the privacy of vehicle owners, especially location privacy, needs to be protected during the proof-collection process. In particular, previous work uses threshold ring signatures to achieve conditional privacy in authenticating V2I messages~\cite{liu2019blockchain,li2018creditcoin}. Another promising technique is zero-knowledge proof~\cite{wang2020hdma}, which allows mutual authentication between vehicles without revealing their identities.

\subsection{The influence by the criticality of V2I data}
Although this paper focuses on rewarding vehicles with reputational scores based on travel proof, the protocol allowing each vehicle to gain reputation through V2I data sharing can be extended to consider data with different criticality levels. The idea is to reward a vehicle with extra points in addition to reputational points derived from travel proof if the sensing data the vehicle shared has high criticality levels (e.g., high-criticality such as incident reports vs. medium-criticality weather conditions). Specifically, the critical level of a given type of V2I data can be influenced by technological, policy, and business factors. 
\begin{itemize}
    \item First, from the perspective of the data receiver, the timing and the location of the vehicle-reported event may decide how critical it is. For example, for a vehicle receiver, a traffic event a few miles away will have less influence on its maneuvers than an event that occurs within the same road segment. Therefore, V2I data indicates an event closer to the receiver might be regarded by the receiver as more "critical."
    \item Second, a given standardization body might assign higher priority to certain types of events. For example, when it comes to dedicated-short range communication (DSRC), the U.S. Federal Communication Commission (FCC) has previously reserved dedicated channels for V2X applications involving incident mitigation and public safety while assigning a service channel for the data exchange about convenience and long-distance V2X applications~\cite{zhang2012vehicle}. Although this paper does not make assumptions about the physical links to realize V2X communication (e.g., DSRC vs. Cellular-V2X), security engineers who want to implement the POT protocol in intelligent transportation applications should be aware of the potential influence of standardization efforts in their region.  
    \item Third, in addition to technological and policy factors, the criticality of a given type of V2X data can be affected by how each stakeholder in the V2X ecosystem valuates the use cases enabled by the data type. For example, while emergency responders or individual drivers might prioritize the V2I data about incident location and severity, a commercial freight company might be more interested in V2I information about route guidance, road, or even weather conditions.
\end{itemize}

\subsection{Fairness in the voting game: plurality vs. weighted voting}
The proposed PPV algorithm considers all vehicles that are eligible to vote to have equal weights in voting. The main purpose is to reduce the influence of "high-reputation" vehicles that are compromised by adversaries on voting results. For example, taxis and trucks can gain higher VVMT values due to long hours of driving per day. Also, special vehicles, such as police cars, fire trucks, or ambulances might be assigned with high reputation than private passenger vehicles.

Alternatively, we may adopt weighted voting to achieve fairness in voting by giving high-reputation vehicles a higher stake in the voting. Although studies show that weighted voting has the potential to improve algorithm efficiency and fairness, it is more vulnerable to adversarial manipulation than plurality (i.e., equal weight) voting when vehicles with high VVMT are compromised by adversaries~\cite{aziz2011false,leonardos2020weighted}. This becomes possible when special vehicles, such as police cars with V2X connectivity, become the target of attacks~\cite{policecarhack}. Our future work will explore how to achieve the tradeoff between fairness and security for weighted voting algorithms.

\subsection{A smooth transition from low to high V2X penetration}
While the security solutions presented in this paper target more on the transitional period, the trust management mechanisms also need to be adapted to evolving usage scenarios as V2X technologies mature. We suggest that security engineers take into account the following aspects when designing trust-based security schemes for future V2X markets with high maturity levels. 

Regarding the maturity level of V2X, we envision that it can be characterized by the level of penetration rate (or the number of users), the number and types of applications that V2X supports, and the geographical regions covered by V2X communication. Therefore, we suggest that trust management and data validation also evolve along with these three factors. 

First, regarding the V2X penetration rate and the number of users, we envision that a large number of vehicles built by different original equipment manufacturers (OEMs) will be pre-installed with V2X modules. This differs from current situations where a lot of traditional vehicle models are integrated with aftermarket V2X devices when individual vehicle owners decide to participate in V2X pilot projects led by public transportation agencies or a company opt-in testing V2X technologies. Therefore, a lot of improvements can be made in the future in the areas of managing and distributing vehicle V2X credentials. This calls for more transparency in the supply chain of connected vehicles, especially during the registration, distribution, and “recycling” of V2X credentials. All of these solutions require that OEMs, wireless communication service providers, and public transportation agencies work closely and establish a connected-vehicle security-sharing sharing platform in which information about vehicles misbehaviors, the status of vehicle credentials, and even the trust or reputation scores proposed and discussed in this paper can be accessed in a secure and privacy-preserved manner.

Second, as the number and types of V2X applications increase, there will be various types of V2X data modalities to which we can get access. For example, With V2X infrastructure, which provides large bandwidth and low latency, multiple vehicles can share high-dimensional raw sensing data (e.g., point cloud) collectively in real-time. Although this paper focuses on single-modality location-based data, which contain basic information about vehicles and traffic, there can be multimodal data, such as 2D images, point cloud, or audio, shared through V2X channels. While providing more detailed information about traffic participants and environmental conditions, such new data modalities also call for new methods for trust establishment and data validation.

Third, while solutions discussed in this paper can work at the regional or state level, we envision there will be a need for trust authorities that can work across the boundaries between states, regions, or even countries. Although researchers have considered the use of hierarchical trust authorities for managing certificates and trust information for vehicles operating in different regions, it can be challenging to deal with regions with limited wireless coverage or even rural areas. More work needs to be done regarding ensuring trust and preventing V2X malicious behaviors in these corner cases. 

\section{Conclusion}
This paper proposes a V2I communication protocol, titled Proof of Travel, to determine each connected vehicle's reputation by verifying its spatial movement with cryptography methods. Based on the gained proof of each vehicle, we proposed a PPV algorithm to enhance previous plurality voting algorithms (e.g., CPV) for validating V2I-reported traffic events. By using a game-theoretic framework, we also prove that, when PPV is adopted, all vehicles choosing to cheat is not a pure Nash equilibrium. The reason is that rational adversaries governed by profit-seeking behaviors will lose interest in gaining proof of travel for launching attacks if the extra cost incurred by proof of travel exceeds the adversarial reward. Results from simulation experiments also suggest that the PPV algorithm can tolerate a higher proportion of malicious vehicles and generate higher throughput than the benchmark CPV algorithm. Here, we focus on the adversaries who are "rational" and lose interest in gaining proof for launching attacks. Future work will tackle the type of adversaries with "irrational" behavior and unlimited resources.

\appendix
\begin{prop}\label{prop_POT_secure}
Under the assumption that drivers are rational in an economic sense, i.e., those whose rewards in voting are bounded by a value are trying to maximize their payoffs, increasing $VVMT_{thld}$ can result in higher proportion of SIDs in the eligible voter group. In other words, a vehicle $i$ with higher $VVMT_i$ is more likely to be owned by a normal rather than a malicious driver.
\end{prop}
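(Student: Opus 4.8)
The plan is to formalize the cost asymmetry at the heart of the POT protocol and to show that it sorts drivers by integrity type. First I would make precise the monotonicity that is implicit in the VVMT equations (eq.~\ref{eq:vanilla_VVMT} and eq.~\ref{eq:flexible_vvmt}): since each of the linear and logistic forms is nondecreasing in the accumulated distance $\sum_t d(ls_i^{t-1},ls_i^t)$, the event $\{VVMT_i > VVMT_{thld}\}$ is equivalent to requiring a vehicle to have traveled---and to have collected valid signatures over---at least some minimum distance $D(VVMT_{thld})$, where $D(\cdot)$ is increasing in $VVMT_{thld}$. Thus membership in the eligible set $\mathcal{I}^{\text{PPV}}(VVMT_{thld})$ is governed entirely by how far a vehicle is willing to travel while building its chain of proofs.

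Next I would split the eligibility decision into a cost--benefit comparison that differs sharply by driver type. For a high-integrity (normal) driver, the distance $D(VVMT_{thld})$ is covered as part of legitimate origin-to-destination travel, so the marginal cost of clearing the threshold is essentially zero, and such a driver's participation is unaffected by increases in $VVMT_{thld}$ up to the length of the natural trip. For a malicious driver, whose sole motive is to manipulate the vote, the same distance is ``compulsory'' overhead, so the cost $C_{mal}(VVMT_{thld})$ of becoming eligible is strictly positive and increasing in $VVMT_{thld}$. Invoking the rationality assumption of Section III, a malicious driver accumulates enough travel to vote only when the adversarial reward---bounded by a constant by hypothesis---exceeds $C_{mal}(VVMT_{thld})$. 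Since the reward is bounded while $C_{mal}$ grows without bound, there is a threshold past which every rational malicious driver opts out of the eligible set.

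From this I would conclude the monotonicity claim. As $VVMT_{thld}$ increases, the count of malicious drivers still willing to qualify weakly decreases (eventually to zero), while, over the relevant range, the set of qualifying high-integrity SIDs is essentially unchanged because their travel is free. Hence $p_{SID} = N_{SID}(\mathcal{I}^{\text{PPV}})/|\mathcal{I}^{\text{PPV}}|$ is nondecreasing in $VVMT_{thld}$, which is exactly Assumption~\ref{assump_int_cost}, and equivalently a vehicle with higher $VVMT_i$ is more likely to be owned by a normal rather than a malicious driver.

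The main obstacle will be controlling the denominator: raising $VVMT_{thld}$ also expels honest drivers with short trips, so I must argue that malicious drivers leave the eligible set at least as fast as honest ones. This is exactly where the cost asymmetry---zero marginal cost for honest travel versus a strictly positive, increasing cost for manufactured travel---does the real work, and where a distributional assumption on trip lengths (or a restriction to a moderate range of $VVMT_{thld}$) becomes necessary. The excerpt already flags the failure mode: if $VVMT_{thld}$ is pushed too high the eligible population becomes small and high-variance, so the clean statement holds only on an intermediate range of thresholds, which I would make explicit as a hypothesis of the proposition.
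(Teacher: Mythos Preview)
Your proposal captures the same economic intuition---the cost asymmetry between normal and malicious travel---and arrives at the same conclusion, but the modeling choices differ from the paper's. You idealize the normal driver's marginal cost of clearing the threshold as essentially zero and let the malicious cost $C_{mal}(VVMT_{thld})$ grow unboundedly against a bounded adversarial reward, which gives a clean qualitative cutoff argument. The paper instead gives \emph{both} types a strictly positive linear cost per unit VVMT (with $C^{vvmt}_i < C^{vvmt}_j$), posits explicit rewards $R_{norm}$ and $R_{adv}$, assumes $VVMT$ is uniform on $[0,M]$, and computes the conditional expected payoffs $E[U_i]=R_{norm}-\tfrac{C^{vvmt}_i}{2}(M+VVMT_{thld})$ and $E[U_j]=R_{adv}-\tfrac{C^{vvmt}_j}{2}(M+VVMT_{thld})$; it then does a three-way case split on $R_{norm}$ versus $R_{adv}$ and, in the interesting middle case, extracts an explicit crossover threshold $VVMT_{thld}^E$ beyond which normal drivers dominate. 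What your route buys is simplicity and a transparent link back to the VVMT monotonicity in eq.~\ref{eq:vanilla_VVMT}--\ref{eq:flexible_vvmt}; what the paper's route buys is a concrete threshold formula and an explicit identification of the failure regime $R_{norm}\ll R_{adv}$ (unbounded adversarial reward), which your argument handles only implicitly through the ``bounded reward'' hypothesis. Your closing paragraph on the denominator and the need for a distributional assumption is exactly where the paper inserts the uniform model, so you have correctly anticipated the missing ingredient even though you do not carry out the computation.
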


\begin{figure*}[tb!]
\centering
\begin{subfigure}{0.32\textwidth}
  \centering
  \includegraphics[width=\textwidth]{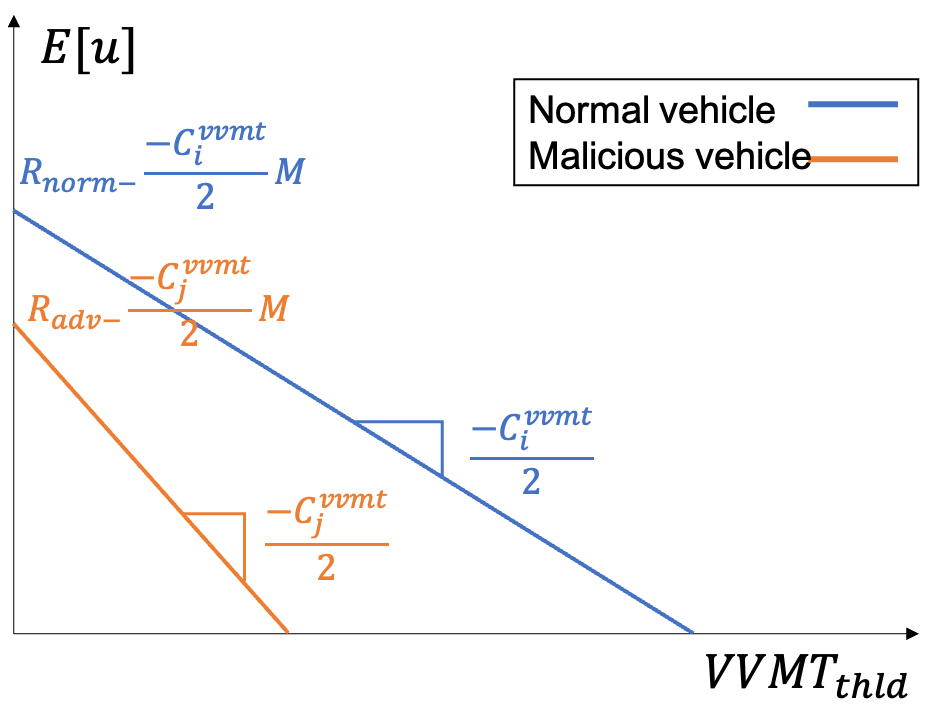}
  \caption{$R_{norm} \geq R_{adv}$}
  \label{fig:exp_payoff_a}
\end{subfigure}
\begin{subfigure}{0.32\textwidth}
  \centering
  \includegraphics[width=\textwidth]{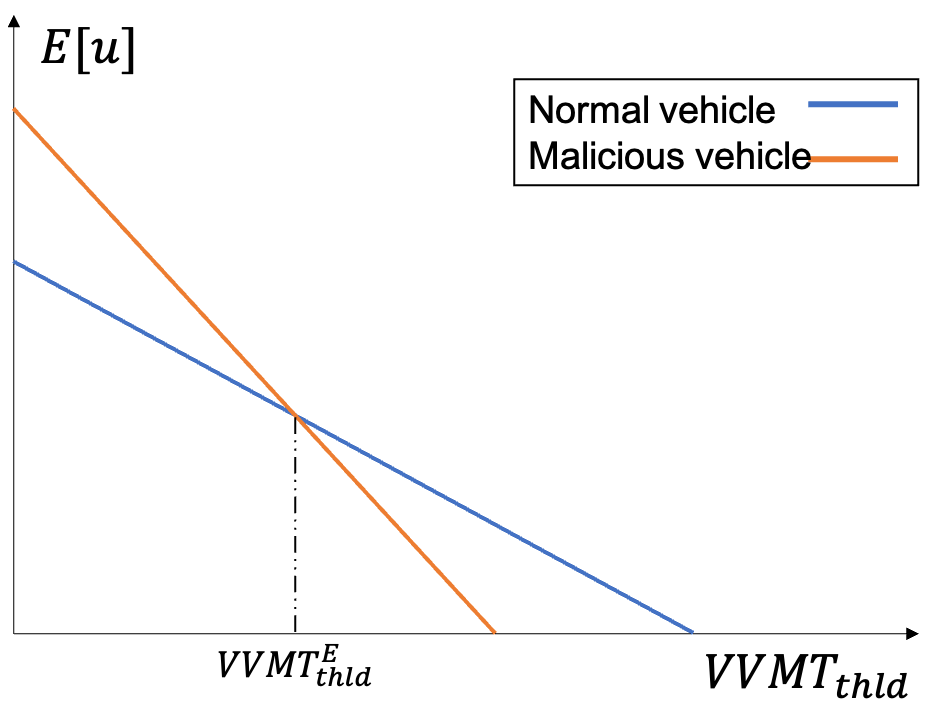}
  \caption{$R_{norm} < R_{adv}$}
  \label{fig:exp_payoff_b}
\end{subfigure}
\begin{subfigure}{0.32\textwidth}
  \centering
  \includegraphics[width=\textwidth]{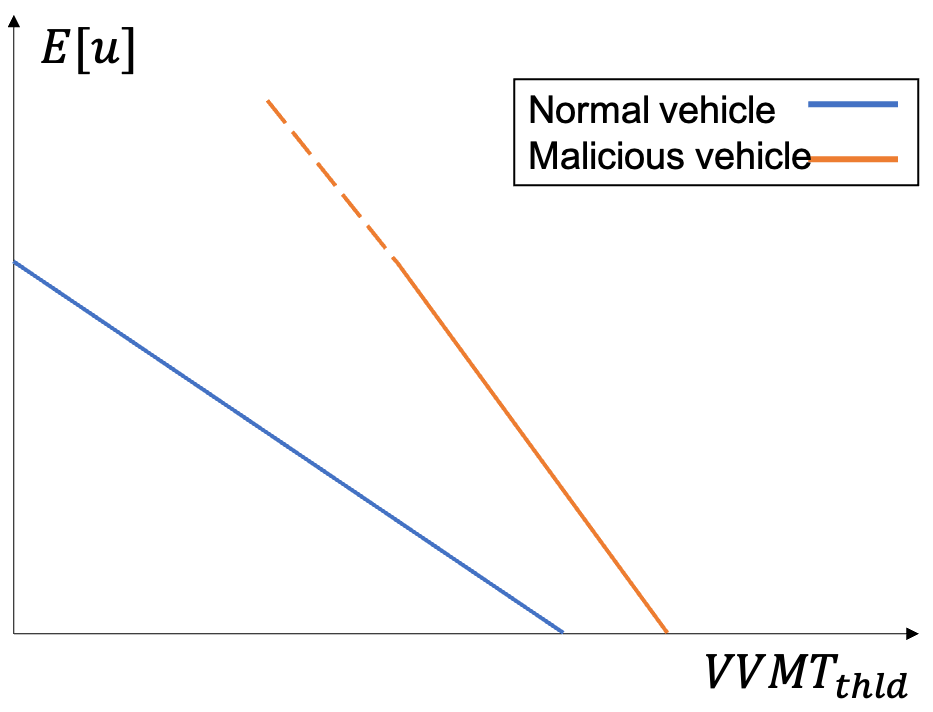}
  \caption{$R_{norm} << R_{adv}$}
  \label{fig:exp_payoff_c}
\end{subfigure}

\caption{The changes in the expected payoffs of normal and malicious vehicles.}
\label{fig:exp_payoff}
\end{figure*}

\begin{proof}
Consider the scenario where vehicle $i$ is owned by a normal driver and vehicle $j$ owned by a malicious driver, we then have the cost of travel per VVMT by $vehicle_i$ less than the cost by vehicle $j$, i.e., $C^{vvmt}_i < C^{vvmt}_j$. The reason is that a normal driver does not have to take the same amount of burden a malicious driver takes to acquire VVMT. In addition to energy consumption for traveling, the malicious driver, whose only goal is to compromise the vehicular network, has to take extra efforts to get access to valid V2X credentials stored in compromised vehicles, while a vehicle driven by the normal driver will naturally move from its origin to destination. 

For vehicle $i$ and $j$, each one must have a positive payoff for having the incentive to participate in voting, which is given in eq.~\ref{payoff_normal} and~\ref{payoff_malicious}.

\begin{equation}\label{payoff_normal}
	U_i = R_{norm} - C^{vvmt}_{i}*VVMT_i 
\end{equation}

\begin{equation}\label{payoff_malicious}
	U_j = R_{adv} - C^{vvmt}_{j}*VVMT_j 
\end{equation}

It should be addressed that the notations that we used here to represent vehicle payoffs ($U_i, U_j$),reward ($R_{norm},R_{adv}$), and cost per verifiable miles of travel ($C_i^{vvmt},C_j^{vvmt}$) are different from those used in the security analysis of the voting games (Section VI-B) in the sense that they refer to different phases of travel, as shown in Fig.~\ref{fig2}. The security analysis presented in section VI-B mainly focuses on actions by each player who has already made efforts to gain enough \text{VVMT} and is thus eligible to vote (right of Fig.~\ref{fig2}). Here, we are interested in understanding how increasing or decreasing $VVMT_{thld}$ can influence the decision-making process by each vehicle before the voting actually occurs (left of Fig.~\ref{fig2}): is it worth traveling certain miles while following the POT protocol to gain reputation if "my" only goal is to send fake V2I messages to compromise the network services? The hope is that by increasing the threshold $VVMT_{thld}$ of voting eligibility, fewer malicious vehicles which originally just wanted to misbehave will continue while we end up with more SID vehicles left.  

For vehicle i, without loss of generality, we assume that its accumulated VVMT is drawn from a uniform distribution $VVMT_{i} \sim U(0,M)$, where $M$ denotes the maximum VVMT a vehicle may acquire. Since only the vehicles that have $VVMT_i > VVMT_{thld}$ are eligible to participate in voting and thus earn rewards, the expected payoff for vehicle $i$ can be denoted by eq.~\ref{payoff_exp_norm}.

\begin{align}\label{payoff_exp_norm}
	E[U_{i}] & = \int_{VVMT_{thld}}^{M} (R_{norm}-C^{vvmt}_{i}*VVMT_i) \nonumber \\
	           & \quad \, \frac{1}{M-VVTM_{thld}}d(VVMT_i) \nonumber \\
	           & = R_{norm} \nonumber \\ 
	           & \quad \, -  \frac{C^{vvmt}_i}{2(M-VVMT_{thld})} (M^2-VVMT^2_{thld}) \nonumber \\
	           & =  R_{norm} - \frac{C^{vvmt}_i}{2}(M+VVMT_{thld})
\end{align}

Following the same procedure, the expected payoff for malicious vehicle $j$ can be derived and is given in eq.~\ref{payoff_exp_adv}.

\begin{align}\label{payoff_exp_adv}
	E[U_{j}] & = R_{adv} - \frac{C^{vvmt}_j}{2}(M+VVMT_{thld})
\end{align}

Based on the values of $R_{norm}$ and $R_{adv}$, the impact of increasing the voting threshold $VVMT_{thld}$ on the likelihood that each type of vehicle will choose to opt in or out of the voting game can be classified into three scenarios, as shown in Fig.~\ref{fig:exp_payoff}.

\textbf{$R_{norm} \geq R_{adv}$}: a normal vehicle always have a stronger incentive to opt-in than its malicious counterpart regardless of the value of $VVMT_{thld}$, as shown in Fig.~\ref{fig:exp_payoff_a}.  

\textbf{$R_{norm} < R_{adv}$}: a normal vehicle will have a stronger incentive to opt-in than its malicious counterpart as long as the voting threshold $VVMT_{thold}$ is greater than a certain value, denoted as $VVMT_{thld}^E$, as shown in eq.~\ref{vvmt_thld_critical}. Although the payoff by the malicious vehicle is higher than the normal one when $VVMT_{thld}$ is small, the former drops at a faster rate as $VVMT_{thld}$ increases. This is understandable as the cost per VVMT for the malicious party is higher. $VVMT_{thld}^E$ can be derived by letting $E[U_i]=E[U_j]$, the critical point when their payoffs become the same. 

\begin{equation}\label{vvmt_thld_critical}
	VVMT_{thld}^E = \frac{2(R_{adv}-R_{norm})}{C^j_{vvmt}-C^i_{vvmt}}-M
\end{equation}

Although a malicious vehicle can gain more rewards than a normal vehicle, the gained rewards are of the same order of magnitude for both parties. By letting $VVMT_{thld}$ $>$ $VVMT_{thld}^E$, we have a better chance of filtering out more malicious vehicles than normal vehicles, as shown in Fig~\ref{fig:exp_payoff_b}.

\textbf{$R_{norm} \ll R_{adv}$}: a malicious vehicle always have more incentives than its normal counterpart to opt-in regardless of the value of $VVMT_{thld}$. Although this exceptional situation contradicts the assumption that this paper makes on rational driver behaviors (i.e., bounded rewards and resources), it can still occur in rare cases if a malicious vehicle has unbounded resources and rewards to opt in, as shown in Fig.~\ref{fig:exp_payoff_c}. For example, an international terrorist group has identified the vehicular network in a given country as critical infrastructure and makes every effort to compromise it. We will tackle this type of malicious driver with unlimited adversarial reward in our future work. 

To summarize, as long as drivers are rational and aim to maximize their payoffs, we can filter out more malicious vehicles than normal vehicles by setting $VVMT_{thld}$ to a relatively high but reasonable value.
\end{proof}

% or
%\appendix  % for no appendix heading
% do not use \section anymore after \appendix, only \section*
% is possibly needed

% use appendices with more than one appendix
% then use \section to start each appendix
% you must declare a \section before using any
% \subsection or using \label (\appendices by itself
% starts a section numbered zero.)
%

%\appendices
%\section{Proof of the First Zonklar Equation}
%Appendix one text goes here.

% you can choose not to have a title for an appendix
% if you want by leaving the argument blank
%\section{}
%Appendix two text goes here.

% use section* for acknowledgment
%\section*{Acknowledgment}
%The authors would like to thank...

% Can use something like this to put references on a page
% by themselves when using endfloat and the captionsoff option.
\ifCLASSOPTIONcaptionsoff
  \newpage
\fi

% trigger a \newpage just before the given reference
% number - used to balance the columns on the last page
% adjust value as needed - may need to be readjusted if
% the document is modified later
%\IEEEtriggeratref{8}
% The "triggered" command can be changed if desired:
%\IEEEtriggercmd{\enlargethispage{-5in}}

% references section

% can use a bibliography generated by BibTeX as a .bbl file
% BibTeX documentation can be easily obtained at:
% http://mirror.ctan.org/biblio/bibtex/contrib/doc/
% The IEEEtran BibTeX style support page is a                      t:
% http://www.michaelshell.org/tex/ieeetran/bibtex/
%\bibliographystyle{IEEEtran}
% argument is your BibTeX string definitions and bibliography database(s)
%\bibliography{IEEEabrv,IEEEexample}

\bibliographystyle{IEEEtran}
\bibliography{IEEEabrv.bib,root.bib}
%
% <OR> manually copy in the resultant .bbl file
% set second argument of \begin to the number of references
% (used to reserve space for the reference number labels box)

% biography section
% 
% If you have an EPS/PDF photo (graphicx package needed) extra braces are
% needed around the contents of the optional argument to biography to prevent
% the LaTeX parser from getting confused when it sees the complicated
% \includegraphics command within an optional argument. (You could create
% your own custom macro containing the \includegraphics command to make things
% simpler here.)
%\begin{IEEEbiography}[{\includegraphics[width=1in,height=1.25in,clip,keepaspectratio]{mshell}}]{Michael Shell}
% or if you just want to reserve a space for a photo:

\begin{IEEEbiography}[{\includegraphics[width=1in,height=1.25in,clip,keepaspectratio]{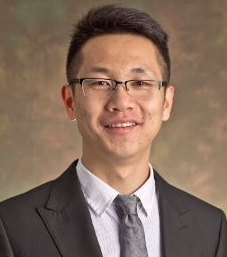}}]{Dajiang Suo} is a research scientist at MIT Auto-ID Lab. He obtained a Ph.D. in Mechanical Engineering from MIT in 2020. Suo holds a B.S. degree in mechatronics engineering, and an S.M. degree in Computer Science and Engineering Systems. His research interests include Internet of Things, connected vehicles, cybersecurity, and RFID.

Before returning to school to pursue PhD degree, Suo was with the vehicle control and autonomous driving team at Ford Motor Company (Dearborn, MI), working on the safety and cyber-security of automated vehicles. He also serves on the Standing Committee on Enterprise, Systems, and Cyber Resilience (AMR40) at the Transportation Research Board.
\end{IEEEbiography}

\vskip -2\baselineskip plus -1fil

\begin{IEEEbiography}[{\includegraphics[width=1in,height=1.25in,clip,keepaspectratio]{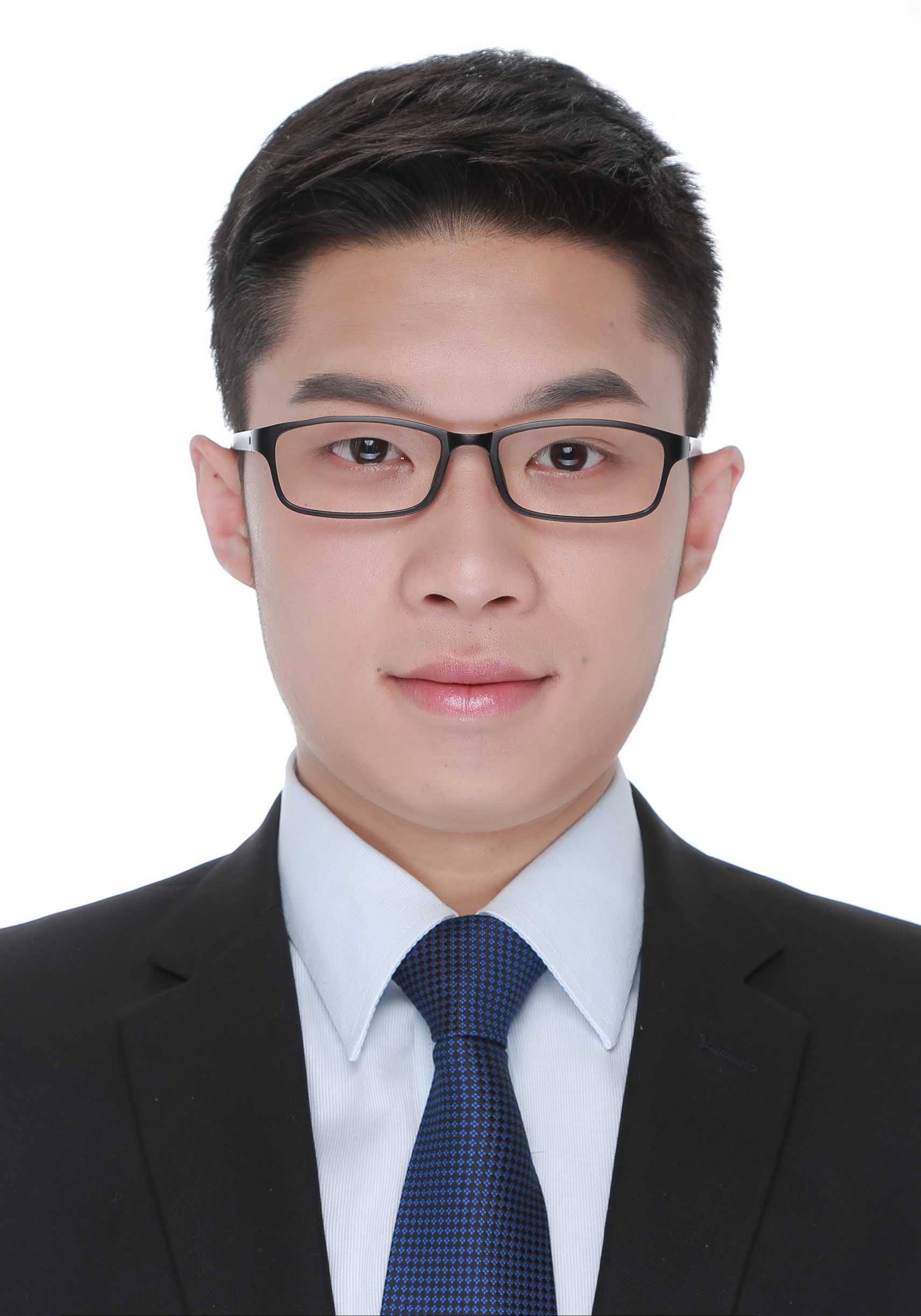}}]{Baichuan Mo} is a research scientist at Lyft. He obtained his Ph.D. degree in transportation from MIT in 2022. He holds a dual Master's degree in Transportation and Computer Science from MIT, and a Bachelor's degree in Civil Engineering from Tsinghua University. His research interests include data-driven transportation modeling, demand modeling, optimization, and applied machine learning.
\end{IEEEbiography}

\vskip -2\baselineskip plus -1fil

\begin{IEEEbiography}[{\includegraphics[width=1in,height=1.25in,clip,keepaspectratio]{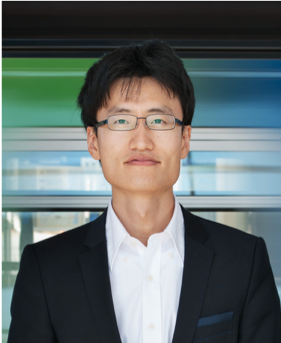}}]{Jinhua Zhao} is the Associate Professor of City and Transportation Planning at the Massachusetts Institute of Technology (MIT). Prof. Zhao brings behavioral science and transportation technology together to shape travel behavior, design mobility system, and reform urban policies. He develops methods to sense, predict, nudge, and regulate travel behavior and designs multimodal mobility systems that integrate automated and shared mobility with public transport. He sees transportation as a language to describe a person, characterize a city, and understand an institution and aims to establish the behavioral foundation for transportation systems and policies.
Prof. Zhao directs the JTL Urban Mobility Lab and Transit Lab at MIT and leads long-term research collaborations with major transportation authorities and operators worldwide, including London, Chicago, Hong Kong, and Singapore. He is the co-director of the Mobility Systems Center of the MIT Energy Initiative, and the director of the MIT Mobility Initiative.
\end{IEEEbiography}

\vskip -2\baselineskip plus -1fil

\begin{IEEEbiography}[{\includegraphics[width=1in,height=1.25in,clip,keepaspectratio]{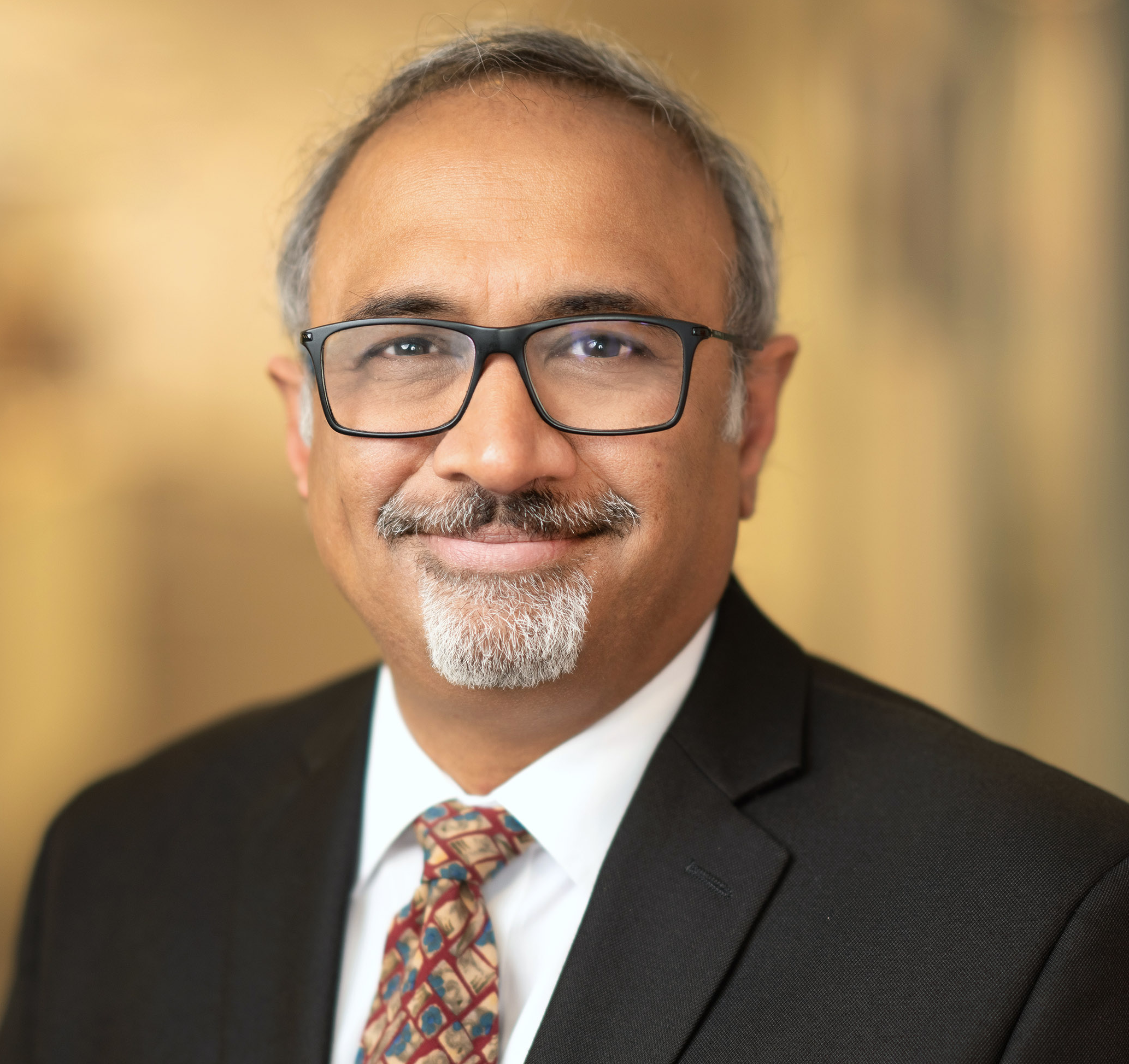}}]{Sanjay E. Sarma}
Sanjay Sarma is the Fred Fort Flowers (1941) and Daniel Fort Flowers (1941) Professor of Mechanical Engineering at MIT. He co-founded the Auto-ID Center at MIT and developed many of the key technologies behind the EPC suite of RFID standards now used worldwide. He was also the founder and CTO of OATSystems, which was acquired by Checkpoint Systems (NYSE: CKP) in 2008. He serves on the boards of GS1, EPCglobal and several companies including CleanLab and Aclara Resources (TSX:ARA). 

Dr. Sarma received his Bachelors from the Indian Institute of Technology, his Masters from Carnegie Mellon University and his PhD from the University of California at Berkeley. Sarma also worked at Schlumberger Oilfield Services in Aberdeen, UK. He has authored over 150 academic papers in computational geometry, sensing, RFID, automation and CAD, and is the recipient of numerous awards for teaching and research including the MacVicar Fellowship, the Business Week eBiz Award and Informationweek’s Innovators and Influencers Award.
\end{IEEEbiography}

% if you will not have a photo at all:
%\begin{IEEEbiographynophoto}{John Doe}
%Biography text here.
%\end{IEEEbiographynophoto}

% insert where needed to balance the two columns on the last page with
% biographies
%\newpage

%\begin{IEEEbiographynophoto}{Jane Doe}
%Biography text here.
%\end{IEEEbiographynophoto}

% You can push biographies down or up by placing
% a \vfill before or after them. The appropriate
% use of \vfill depends on what kind of text is
% on the last page and whether or not the columns
% are being equalized.

%\vfill

% Can be used to pull up biographies so that the bottom of the last one
% is flush with the other column.
%\enlargethispage{-5in}

% that's all folks
\end{document}